\documentclass[11pt]{article}
\usepackage[letterpaper,margin=1in]{geometry}
\usepackage{amssymb,amsmath,amsthm}
\usepackage{thmtools,thm-restate}
\usepackage{enumerate}
\usepackage[shortlabels]{enumitem}
\usepackage{linegoal}
\usepackage[T1]{fontenc}
\usepackage{caption}
\usepackage{subcaption}
\usepackage[svgnames]{xcolor}
\usepackage{complexity}
\usepackage{cite}
\usepackage{float}
\usepackage{comment}
\usepackage{dirtytalk}

\usepackage[numbers]{natbib}
\usepackage[colorlinks,
linkcolor={blue},
citecolor={black},
breaklinks=true,
urlcolor={blue}]{hyperref}
\usepackage[capitalize]{cleveref}

\usepackage{algorithm}
\usepackage[noend]{algpseudocode}
\newcounter{tbox}

\setlist[itemize]{leftmargin=5.5mm}
\newcommand{\defn}[1]{\textcolor{Maroon}{\emph{#1}}}

\newtheorem{ltheorem}{Theorem}

\theoremstyle{plain}
\newtheorem{theorem}{Theorem}
\newtheorem{lemma}[theorem]{Lemma}

\theoremstyle{definition}

\newcommand*{\myproofname}{Proof}

\usepackage{tikz}
\usetikzlibrary{arrows}
\usetikzlibrary{arrows.meta}
\usetikzlibrary{decorations.pathmorphing, decorations.pathreplacing, decorations.shapes}
\usetikzlibrary{decorations.markings}
\usetikzlibrary{calc}
\usetikzlibrary{shapes.geometric}

\tikzset{
  circ/.style = {circle,draw,fill,inner sep=1.1pt},
  circg/.style = {circle,draw=lightGray,fill=lightGray,inner sep=1.1pt},
  invisible/.style = {circle,draw=none,inner sep=0pt,font=\tiny},
  nonedge/.style={decorate,decoration={snake,amplitude=.3mm,segment length=1mm},draw}
}

\usepackage{todonotes}

\usepackage{booktabs,tabularx}

\ExplSyntaxOn

\NewDocumentCommand{\problemStatement}{mm}
 {
  \arteche_problemstatement:nn { #1 } { #2 }
 }

\prop_new:N \l_arteche_problemstatement_body_prop

\cs_new_protected:Nn \arteche_problemstatement:nn
 {
  \prop_set_from_keyval:Nn \l_arteche_problemstatement_body_prop { #2 }
  \begin{center}
  \begin{tabularx}{\columnwidth}{@{}lX@{}}
  \toprule
  \multicolumn{2}{@{}c@{}}{\textsc{#1}}\tabularnewline
  \midrule
  \prop_map_function:NN \l_arteche_problemstatement_body_prop \__arteche_problemstatemet_do:nn
  \bottomrule
  \end{tabularx}
  \end{center}
}

\cs_new_protected:Nn \__arteche_problemstatemet_do:nn
 {
  \bfseries \tl_rescan:nn { } { #1 }: & #2 \\
 }

\ExplSyntaxOff

\usepackage{tcolorbox}
\tcbuselibrary{skins}

\newtcolorbox{mybox}[1]{minipage boxed title*=-2cm,
enhanced,attach boxed title to top center=
{yshift=-3mm,yshifttext=-1mm},colback=RoyalBlue!5!white,
boxed title style={size=small,colback=RoyalBlue!20},coltitle=black,
center title,title={#1}}

\title{Odd Cycle Transversal on $H$-free graphs}
\date{}

\author{
	Esther Galby\thanks{Chalmers University of Technology, Sweden.}
	\and
Paloma T. de Lima\thanks{IT University of Copenhagen, Denmark, and Norwegian School of Economics, Norway. Supported by the Independent Research Fund Denmark grant agreement number 2098-00012B.}
	\and
	Andrea Munaro\thanks{University of Parma, Italy.}
	\and
	Amir Nikabadi\thanks{IT University of Copenhagen, Denmark. Supported by the Independent Research Fund Denmark grant agreement number 2098-00012B.}
}

\begin{document}
 \maketitle

\begin{abstract}
\textsc{Odd Cycle Transversal} is a classic $\mathsf{NP}$-hard graph optimization problem asking for a minimum-weight set of vertices whose deletion makes the input graph bipartite, or equivalently, a maximum-weight induced bipartite subgraph. We show that \textsc{Odd Cycle Transversal} is quasi-polynomial-time solvable on $kP_4$-free graphs, for every fixed $k \in \mathbb{N}$. In fact, we provide an $n^{O_k(\log n)}$-time algorithm for the more general \textsc{Max-Weight List $2$-Colorable Induced Subgraph}, where the notation $O_{k}(\cdot)$ hides factors depending on $k$. Paired with known results from the literature, this allows us to obtain a complete complexity dichotomy for these two problems on $H$-free graphs into cases solvable in quasi-polynomial time and cases which are $\mathsf{NP}$-hard, in particular resolving an open problem of Agrawal, Lima, Lokshtanov, Saurabh, and Sharma [SODA 2024]. 

Our algorithms are based on a new structural tool that may be of independent interest. We introduce the notion of $H$-amiable family and show that, for every fixed graph $H$ without isolated vertices and every fixed $k\ge2$, every $kH$-free graph admits an $H$-amiable family of quasi-polynomial size that can be constructed in quasi-polynomial time. Besides yielding the aforementioned algorithms, this result gives, for every fixed connected graph $H$ and every fixed $k\ge2$, a reduction from \textsc{Max-Weight Independent Set} on $kH$-free graphs to the same problem on $H$-free graphs with $n^{O_{H,k}(\log n)}$ overhead. In this setting, it improves the $n^{O_{H,k}(\log^3 n)}$ overhead obtained by specializing the general reduction of Gartland and Lokshtanov [FOCS 2020].
\end{abstract}

\section{Introduction}

\textsc{Odd Cycle Transversal} is a classic and well-studied graph optimization problem that, given a vertex-weighted graph, asks to find a minimum-weight subset of vertices whose deletion results in a bipartite graph. By complementing the solution, this problem is equivalent to finding a maximum-weight induced bipartite (i.e., $2$-colorable) subgraph of the input graph.   

In this paper, we study a fundamental class of graph optimization problems generalizing \textsc{Odd Cycle Transversal}. A problem in this class consists in finding a maximum-weight induced subgraph satisfying a certain fixed property $\Pi$. A graph property $\Pi$ is \defn{nontrivial} if it is true for infinitely many graphs and false for infinitely many graphs, and it is \defn{hereditary} if, whenever a graph satisfies $\Pi$, all its induced subgraphs satisfy $\Pi$ as well. A classic result of \citet*{LY80} states that, whenever the fixed property $\Pi$ is nontrivial and hereditary, the corresponding problem is $\mathsf{NP}$-hard. In particular, \textsc{Odd Cycle Transversal} is $\mathsf{NP}$-hard. We are interested in the case where $\Pi$ is the property of being list $r$-colorable. In order to properly define the corresponding problem, we first recall some definitions. 

Let $G = (V, E)$ be a finite simple graph. A \defn{coloring} of $G$ is a mapping $\varphi \colon V\rightarrow \{1, 2, \ldots\}$ that gives each vertex $u \in V$ a \defn{color} $\varphi(u)$ in such a way that, for every two adjacent vertices $u$ and $v$ in $G$, we have $\varphi(u) \neq \varphi(v)$. For $r \geq 1$, a coloring $\varphi$ of $G$ is an \defn{$r$-coloring} if $\varphi(u) \in \{1, \ldots, r\}$ for every $u \in V$, and a graph is \defn{$r$-colorable} if it admits an $r$-coloring. For $r \geq 1$, an \defn{$r$-list assignment} of $G$ is a function $\mathcal{L}\colon V \rightarrow 2^{\{1,\ldots,r\}}$ that assigns each vertex $u \in V$ a \defn{list} $\mathcal{L}(u) \subseteq \{1,\ldots,r\}$ of admissible colors for $u$. A coloring $\varphi$ of $G$ \defn{respects} $\mathcal{L}$ (or is an \defn{$\mathcal{L}$-coloring}) if  $\varphi(u)\in \mathcal{L}(u)$ for every $u\in V$. For a fixed integer $r \geq 1$, the problem we are interested in is the following.

\begin{mybox}{\textsc{Max-Weight List $r$-Colorable Induced Subgraph}}
\textbf{Input:} A graph $G$ with a weight function $w\colon V(G) \rightarrow \mathbb{Q}_{\geq 0}$ and an $r$-list assignment $\mathcal{L}$ of $G$.\\
\textbf{Task:} Find a subset $F \subseteq V(G)$ such that the induced subgraph $G[F]$ admits a coloring that respects $\mathcal{L}$, and the weight $w(F) = \sum_{v \in F}w(v)$ is maximum subject to this condition.
\end{mybox}

All rational weights are encoded in binary. Throughout the paper, a running-time bound for a weighted problem that is expressed only in terms of the number $n$ of vertices suppresses a factor polynomial in the total binary encoding length of the input weights.

\textsc{Max-Weight List $r$-Colorable Induced Subgraph}, besides generalizing \textsc{Odd Cycle Transversal}, is a common generalization of several other well-known and deeply investigated $\mathsf{NP}$-hard problems. For example, for $r=1$, \textsc{Max-Weight List $r$-Colorable Induced Subgraph} is equivalent to \textsc{Max-Weight Independent Set}, which is the problem of finding a maximum-weight subset of pairwise non-adjacent vertices of a given graph. Moreover, \textsc{Max-Weight List $r$-Colorable Induced Subgraph} generalizes \textsc{List $r$-Coloring} and hence \textsc{$r$-Coloring} as well, which are known to be polynomial-time solvable for $r \leq 2$ (see, e.g., \citep{Tuz97}) and \textsf{NP}-hard for all $r > 2$ \citep{K72}. Here, for a fixed $r \geq 1$, \textsc{List $r$-Coloring} is the problem of deciding whether a given graph $G$ with an $r$-list assignment $\mathcal{L}$ admits a coloring that respects $\mathcal{L}$. By setting $\mathcal{L}(u)=\{1,\ldots,r\}$ for every $u\in V(G)$, we obtain {\sc $r$-Coloring}. Note also that, for $r_1 \leq r_2$, \textsc{List $r_1$-Coloring} is a special case of \textsc{List $r_2$-Coloring}.

Given the hardness of \textsc{Max-Weight List $r$-Colorable Induced Subgraph}, a considerable effort has been made to delineate the boundary of (in)tractability for restricted classes of inputs, most notably for \defn{hereditary classes}, that is, graph classes closed under vertex deletion, which are particularly well suited for a systematic investigation. The ultimate goal is to obtain complexity dichotomies telling us for which hereditary classes the problem can or cannot be solved efficiently (under standard complexity assumptions). Before reviewing the state of the art of this rich area at the intersection of structural and algorithmic graph theory, we need to recall some definitions. 

A graph $G$ is \defn{$H$-free}, for some graph $H$, if it contains no induced subgraph isomorphic to $H$, that is, we cannot modify $G$ into $H$ by a sequence of vertex deletions. For a family of graphs $\mathcal{F}$, a graph is \defn{$\mathcal{F}$-free} if it is $H$-free for every $H \in \mathcal{F}$, and a class is hereditary if and only if it coincides with the class of $\mathcal{F}$-free graphs for some (possibly infinite) $\mathcal{F}$. The \defn{disjoint union} $G + H$ of graphs $G$ and $H$ is the graph with vertex set $V(G) \cup V(H)$ and edge set $E(G) \cup E(H)$. We denote the disjoint union of $k$ copies of $G$ by $kG$ and let $P_s$ denote the chordless path on $s$ vertices. 

Henceforth, we focus on hereditary classes obtained by forbidding a single induced subgraph, a setting which turned out to be particularly interesting. It is known that \textsc{Odd Cycle Transversal} remains $\mathsf{NP}$-hard\footnote{Note that all hardness results mentioned in this paper hold even in the unweighted setting.} on $H$-free graphs whenever $H$ is not a linear forest (i.e., a disjoint union of paths). Indeed, \citet*{chiarelli2018minimum} showed that it is \textsf{NP}-hard on $H$-free graphs if $H$ contains a cycle or a claw (i.e., the $4$-vertex star). In recent years, considerable work has been done toward classifying the complexity of \textsc{Odd Cycle Transversal} and the more general \textsc{Max-Weight List $r$-Colorable Induced Subgraph} on graphs forbidding an induced linear forest, and obtaining complete complexity dichotomies ($\mathsf{P}$ or $\mathsf{QP}$ vs $\mathsf{NP}$-hard) on $H$-free graphs has been repeatedly posed as an open problem, for example in  \citep{agrawal2024odd,ACPSC25,CKPRS21,CPS19,GLMN25}. We now review the known results, which we then summarize in \Cref{thm:literature-oct}.

Consider first \textsc{Odd Cycle Transversal}. \citet*{chiarelli2018minimum} showed that it is polynomial-time solvable on $kP_2$-free graphs for all $k\in \mathbb{N}$. \citet*{dabrowski2020cycle} showed that it is polynomial-time solvable on $(P_3 + kP_1)$-free graphs for all $k\in \mathbb{N}$ and $\mathsf{NP}$-hard on $\{P_6, P_5+P_2\}$-free graphs. \citet*{ALLSS24} provided a quasi-polynomial-time algorithm on $P_5$-free graphs, which was later improved to a polynomial-time algorithm on the same graph class by \citet*{agrawal2024odd}. 

Consider now \textsc{Max-Weight List $r$-Colorable Induced Subgraph} for $r \geq 2$. \citet*{lokshtanov2024maximum} showed that it is polynomial-time solvable on $P_5$-free graphs. This was extended by \citet*{henderson2024maximum}, who showed that it is polynomial-time solvable on $(P_5 + kP_1)$-free graphs for all $k\in \mathbb{N}$. \citet*{GLMN25} showed that it is polynomial-time solvable on $kP_3$-free graphs for all $k\in \mathbb{N}$. Moreover, the problem remains $\mathsf{NP}$-hard on $(P_4+P_2)$-free graphs for all $r \geq 5$, since \textsc{List $r$-Coloring} is $\mathsf{NP}$-hard on this class, as shown by  \citet*{couturier2015list}.    

\begin{ltheorem}\label{thm:literature-oct}Let $r \geq 2$ be a fixed integer. \textsc{Max-Weight List $r$-Colorable Induced Subgraph} and \textsc{Odd Cycle Transversal} on $H$-free graphs are polynomial-time solvable if $H$ is an induced subgraph of either $kP_3$ or $P_5 + kP_1$, for some $k \geq 1$, and remain \textsf{NP}-hard if $H$ contains $P_6$ or $P_5+P_2$ as an induced subgraph. 
Moreover, for $r \geq 5$, \textsc{Max-Weight List $r$-Colorable Induced Subgraph} on $H$-free graphs remains \textsf{NP}-hard if $H$ contains $P_4+P_2$ as an induced subgraph.
\end{ltheorem}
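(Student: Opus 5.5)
This theorem collects results from the literature, so the plan is to assemble the cited results and check that they cover every case stated. No new combinatorial argument is required. The one thing to verify carefully is that each cited result applies to the right problem, and that the reductions between problems go in the right direction.

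\emph{Polynomial-time side.} Freeness is monotone: if $H$ is an induced subgraph of $H'$, every $H$-free graph is $H'$-free. Hence it suffices to handle $H=kP_3$ and $H=P_5+kP_1$ for each fixed $k$.
\begin{itemize}
\item For $kP_3$-free graphs, polynomial-time solvability of \textsc{Max-Weight List $r$-Colorable Induced Subgraph} for every fixed $r$ is the result of \citet*{GLMN25}.
\item For $(P_5+kP_1)$-free graphs, it is the result of \citet*{henderson2024maximum}, which extends \citet*{lokshtanov2024maximum}.
\end{itemize}
\textsc{Odd Cycle Transversal} is the case $r=2$ with all lists equal to $\{1,2\}$, after complementing the solution. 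So it inherits both positive results.

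\emph{Hardness side.} Again by monotonicity, it suffices to show hardness on $\{P_6,P_5+P_2\}$-free graphs. Such graphs are $H$-free whenever $H$ contains $P_6$ or $P_5+P_2$ as an induced subgraph.
\begin{itemize}
\item For \textsc{Odd Cycle Transversal} this hardness is the result of \citet*{dabrowski2020cycle}, in the unweighted setting.
\item For $r\ge 2$ it transfers to \textsc{Max-Weight List $r$-Colorable Induced Subgraph}: given a graph, assign every vertex the list $\{1,2\}$ and unit weights. An optimal solution is then a maximum induced bipartite subgraph, whose complement is a minimum odd cycle transversal. This reduction is valid for any $r\ge2$, since lists are allowed to be proper subsets of $\{1,\dots,r\}$.
\end{itemize}

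\emph{Final claim.} For $r\ge5$, \citet*{couturier2015list} showed that \textsc{List $r$-Coloring} is $\mathsf{NP}$-hard on $(P_4+P_2)$-free graphs. This reduces to \textsc{Max-Weight List $r$-Colorable Induced Subgraph} by keeping the same lists and using unit weights. The instance is a yes-instance if and only if the optimum equals $n$. Monotonicity again extends the hardness to all $H$ containing $P_4+P_2$.

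The only step needing care is matching the precise hypotheses of the cited papers: weighted versus unweighted versions, and whether the list versions are covered for all $r$. I expect no genuine obstacle beyond this bookkeeping.
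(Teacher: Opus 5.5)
Your proposal is correct and follows the paper's own justification, which is just the literature survey preceding the theorem. That survey cites the same results you use: \citet*{GLMN25} for $kP_3$, \citet*{henderson2024maximum} for $P_5+kP_1$, \citet*{dabrowski2020cycle} for $\{P_6,P_5+P_2\}$-free graphs, and \citet*{couturier2015list} for $P_4+P_2$ when $r\geq 5$. Your monotonicity argument and the two special-case reductions (lists $\{1,2\}$ for \textsc{Odd Cycle Transversal}, and optimum equal to $n$ for \textsc{List $r$-Coloring}) are exactly the steps the paper leaves implicit.
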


As observed in \citep{GLMN25}, \Cref{thm:literature-oct} gives a complete complexity dichotomy ($\mathsf{P}$ vs $\mathsf{NP}$-hard) for \textsc{Max-Weight List $r$-Colorable Induced Subgraph} on $H$-free graphs when $r \geq 5$. Thus the only open cases toward a complete dichotomy for $r \geq 2$ arise when $r \in \{2,3,4\}$ and $H$ is an induced subgraph of $kP_4$, for $k \in \mathbb{N}$.  

\subsection{Our results}

In this paper, we make considerable progress toward a complete complexity dichotomy ($\mathsf{P}$ vs $\mathsf{NP}$-hard) for \textsc{Max-Weight List $2$-Colorable Induced Subgraph} and \textsc{Odd Cycle Transversal} on $H$-free graphs. Our main result is the following. The notation $O_{k}(\cdot)$ hides factors depending on $k$.

\begin{theorem}\label{thm:main_1}
For every $k \in \mathbb{N}$, \textsc{Max-Weight List $2$-Colorable Induced Subgraph} admits an $n^{O_k(\log n)}$-time algorithm on $n$-vertex $kP_4$-free graphs.
\end{theorem}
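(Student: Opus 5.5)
We would prove \Cref{thm:main_1} by induction on $k$, using the structural tool announced in the abstract: for every fixed $H$ without isolated vertices and every fixed $k\ge 2$, every $kH$-free graph has an $H$-amiable family of quasi-polynomial size, computable in quasi-polynomial time. The base case $k=1$ is covered by \Cref{thm:literature-oct}, since $P_4$ is an induced subgraph of $P_5$, so the problem is polynomial-time solvable on $P_4$-free graphs. (Alternatively, $P_4$-free graphs are cographs and a direct dynamic program over the cotree works.) For the inductive step we apply the amiable-family construction with $H=P_4$. Intuitively, an $H$-amiable family is a collection $\mathcal{F}$ of vertex subsets, or of branching configurations, with two properties. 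First, for every optimal solution $F$ some member of $\mathcal{F}$ ``captures'' it. Second, in the branch corresponding to that member, the remaining instance decomposes into parts that are $(k-1)H$-free, or into components whose interaction with the rest is controlled by a bounded-size guessed set.

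The concrete plan is as follows. Given a $kP_4$-free graph $G$ with list assignment $\mathcal{L}$ and weights $w$, we enumerate the $n^{O_k(\log n)}$ members of the amiable family. For each member we guess the bounded part of the solution it specifies, together with the colors assigned to that part. The second is cheap: only polynomially many guesses of $O_k(1)$ vertices, times $2^{O_k(1)}$ colorings, are needed per level. Fixing a colored vertex $v$ with color $c$ lets us prune the lists: every neighbor of $v$ loses color $c$, and vertices with empty lists are deleted. The key property we expect to extract from amiability is that, once the guessed vertices are colored, the rest of the graph splits into pieces that are either $(k-1)P_4$-free or separated from one another. Each piece is then solved by the inductive hypothesis, and the answers are combined by summing weights over independent pieces. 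Taking the maximum over all branches gives the optimum.

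The recursion must keep the total running time at $n^{O_k(\log n)}$ rather than $n^{O_k(\log^2 n)}$. For this we would make the branching balanced in the Gartland–Lokshtanov style: each branch either shrinks a suitable measure (such as the number of induced $P_4$'s through a vertex, or the component size) by a constant factor, or drops $k$ by one. The recursion depth is then $O_k(\log n)$ with polynomial fan-out, which is why the amiable family is needed with quasi-polynomial rather than polynomial size.

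The main obstacle is adapting amiability, which is natural for \textsc{Max-Weight Independent Set}, to $2$-colorable subgraphs. An independent set meets each clique in at most one vertex. A $2$-colorable solution, by contrast, can contain whole large bipartite pieces, so the solution does not behave sparsely with respect to the family. I would first try to handle this by splitting the solution $F$ into its two color classes $F_1,F_2$, both independent sets. I would then apply the amiable-family guarantee to each class separately, taking a product of two families, which is still quasi-polynomial. The guessed neighborhoods would be used to fix lists: a vertex adjacent to a guessed vertex of color class $i$ cannot receive color $i$. After branching, each vertex list has size at most one near the guessed parts. Such vertices reduce to independent-set-like constraints, and that is where the $(k-1)P_4$-freeness of the residual pieces must be established.
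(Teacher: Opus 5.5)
\textbf{There is a genuine gap, at exactly the step that carries the proof.}

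Your last paragraph has the right first move, and it is also the paper's. In the paper, an $H$-amiable family is simply a family of vertex sets, each inducing an $H$-free graph, such that every independent set of $G$ lies in some member. Enumerating $(A_1,A_2)\in\mathcal{F}\times\mathcal{F}$ therefore gives, for the color classes $F_1,F_2$ of an optimal solution, containers $F_i\subseteq A_i$ with $G[A_i]$ $P_4$-free.

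But you never say how to solve the instance once such containers are fixed. Instead the proposal rests on a property you ``expect to extract from amiability'': after guessing and coloring $O_k(1)$ vertices, the rest splits into $(k-1)P_4$-free or mutually separated pieces, handled by induction on $k$. Nothing supports this.
\begin{itemize}
\item The family carries no such information.
\item Unlike for independent sets, putting $v$ into a $2$-colorable solution does not let you delete $N[v]$, since its neighbors may take the other color. So the usual mechanism that destroys anticomplete copies of $P_4$ is unavailable.
\item After fixing containers and pruning lists around a bounded guessed set, the subproblem still lives on $G[A_1\cup A_2]$, which may be all of $G$. No parameter has decreased.
\end{itemize}
You concede this in your final sentence. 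The balanced-branching discussion also has no concrete measure behind it, so neither correctness nor the $n^{O_k(\log n)}$ bound is established.

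\textbf{The missing idea: once the containers are fixed, no induction or further branching is needed.} Build the two-layer graph $\mathcal{Q}$. Its $i$-th layer is a copy of $G[A_i\cap V_i(\mathcal{L})]$, and there is a cross-edge $v^1v^2$ whenever both copies exist. Its stable sets correspond, weight-preservingly, to pairs $(Y,\varphi)$ with $\varphi$ an $\mathcal{L}$-coloring of $G[Y]$ and $\varphi^{-1}(i)\subseteq A_i$.

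When both layers are $P_4$-free, $\mathcal{Q}$ is perfect, for the following reasons.
\begin{itemize}
\item \emph{No odd hole.} Every maximal subpath of a hole inside one layer is an induced path of length $1$ or $2$.
\begin{itemize}
\item With two cross-edges, both such subpaths have the same length, since that length is determined by whether $xy\in E(G)$. The hole then has length $4$ or $6$.
\item With $r\geq4$ cross-edges, a length-one subpath $x^1y^1$ would force the chord $x^2y^2$. So every subpath has length $2$, and the hole has even length $3r$.
\end{itemize}
\item \emph{No odd antihole.} A $5$-antihole is a $5$-hole, already excluded. An odd antihole of length at least $7$ cannot lie in one layer, because it contains a $P_4$. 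It cannot meet both layers either: each of its vertices has at most one neighbor in the other layer but only two non-neighbors in the antihole. This limits each layer to at most $3$ of its vertices.
\end{itemize}

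By the Strong Perfect Graph Theorem and Gr\"otschel--Lov\'asz--Schrijver, a maximum-weight stable set of $\mathcal{Q}$ is found in polynomial time. Maximizing over the $n^{O_k(\log n)}$ pairs proves the theorem directly. The case $k=1$ is handled by taking $\mathcal{F}=\{V(G)\}$.
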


\Cref{thm:main_1} provides strong evidence that \textsc{Max-Weight List $2$-Colorable Induced Subgraph} is not $\mathsf{NP}$-hard on $kP_4$-free graphs, as
otherwise every problem in $\mathsf{NP}$ would be solvable in quasi-polynomial time (this would contradict, for example, the Exponential-Time Hypothesis). As an immediate consequence of \Cref{thm:main_1}, we obtain the following. 

\begin{theorem}\label{thm:main}
For every $k \in \mathbb{N}$, \textsc{Odd Cycle Transversal} admits an $n^{O_k(\log n)}$-time algorithm on $n$-vertex $kP_4$-free graphs.
\end{theorem}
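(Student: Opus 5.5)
The statement follows from \Cref{thm:main_1} by a direct reduction, with no new structural work. Given an instance of \textsc{Odd Cycle Transversal}, namely an $n$-vertex $kP_4$-free graph $G$ with weights $w\colon V(G)\to\mathbb{Q}_{\geq 0}$, I will build an instance of \textsc{Max-Weight List $2$-Colorable Induced Subgraph}. The graph is the same $G$, the weights are the same $w$, and the $2$-list assignment is the trivial one, $\mathcal{L}(u)=\{1,2\}$ for every $u\in V(G)$. Since the graph is unchanged, it is still $kP_4$-free, so \Cref{thm:main_1} applies and returns a set $F$ of maximum weight such that $G[F]$ admits an $\mathcal{L}$-coloring. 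This takes time $n^{O_k(\log n)}$, times a factor polynomial in the encoding length of the weights, as per the paper's convention. The algorithm then outputs $S=V(G)\setminus F$.

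Correctness rests on one observation. With full lists $\{1,2\}$, a coloring of $G[F]$ that respects $\mathcal{L}$ is exactly a proper $2$-coloring of $G[F]$. Hence $G[F]$ admits an $\mathcal{L}$-coloring if and only if $G[F]$ is bipartite, which in turn holds if and only if $V(G)\setminus F$ is an odd cycle transversal of $G$.

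Complementation turns the maximization into the required minimization. Because $w(S)=w(V(G))-w(F)$ and $w(V(G))$ is a fixed constant, $F$ has maximum weight among sets inducing a bipartite subgraph exactly when $S$ has minimum weight among odd cycle transversals. Thus $S$ is an optimal solution.

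Forming the instance and the complement takes only linear time, so the total running time is $n^{O_k(\log n)}$. No step here is a real obstacle: all of the difficulty sits inside \Cref{thm:main_1}, which we may assume. The only things to check are that the trivial list assignment is a legitimate $2$-list assignment and that the $kP_4$-freeness hypothesis carries over, and both hold because neither the graph nor the weights are modified.
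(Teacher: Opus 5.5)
Your proposal is correct and follows the same route as the paper, which derives this theorem as an immediate consequence of \Cref{thm:main_1}: take the full list assignment $\mathcal{L}(u)=\{1,2\}$, so that $\mathcal{L}$-colorable induced subgraphs are exactly the bipartite ones, and complement the optimal set. Your write-up simply makes the paper's one-line reduction explicit; nothing is missing.
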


Combining \Cref{thm:main_1,thm:main} with \Cref{thm:literature-oct}, we obtain a \textit{complete} complexity dichotomy for \textsc{Max-Weight List $2$-Colorable Induced Subgraph} and  \textsc{Odd Cycle Transversal} on $H$-free graphs into cases solvable in quasi-polynomial time and cases which are $\mathsf{NP}$-hard. This resolves, in particular, \citep[Problem~4, p.~5288]{ALLSS24}. More precisely, the following holds.

\begin{theorem} Both \textsc{Max-Weight List $2$-Colorable Induced Subgraph} and \textsc{Odd Cycle Transversal} on $H$-free graphs are polynomial-time solvable if $H$ is an induced subgraph of either $kP_3$ or $P_5+kP_1$ for some $k \geq 1$, quasi-polynomial-time solvable in all remaining cases where $H$ is an induced subgraph of $kP_4$ for some $k \geq 1$, and $\mathsf{NP}$-hard otherwise.
\end{theorem}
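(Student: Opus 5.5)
The dichotomy is obtained by combining \Cref{thm:literature-oct} with \Cref{thm:main_1,thm:main}, together with a case analysis on the structure of $H$. We treat \textsc{Max-Weight List $2$-Colorable Induced Subgraph} and \textsc{Odd Cycle Transversal} in parallel. Tractability for the more general problem implies tractability for \textsc{Odd Cycle Transversal}: take all lists equal to $\{1,2\}$ and complement the solution. In the other direction, all hardness results quoted in \Cref{thm:literature-oct} are already stated for both problems.

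\emph{Polynomial cases.} If $H$ is an induced subgraph of $kP_3$ or of $P_5+kP_1$, then \Cref{thm:literature-oct} with $r=2$ gives polynomial-time algorithms for both problems.

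\emph{Quasi-polynomial cases.} Suppose $H$ is an induced subgraph of $kP_4$ for some $k$. Every $H$-free graph is then $kP_4$-free, because an induced $kP_4$ would contain an induced copy of $H$. \Cref{thm:main_1} and \Cref{thm:main} therefore give $n^{O_k(\log n)}$-time algorithms.

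\emph{Hardness.} Now suppose $H$ is an induced subgraph of none of $kP_3$, $P_5+kP_1$, $kP_4$. We show that $H$ contains $P_6$, $P_5+P_2$, a cycle, or a claw as an induced subgraph. Hardness then follows from \Cref{thm:literature-oct} in the first two cases and from the result of \citet{chiarelli2018minimum} in the last two, since hardness on $H'$-free graphs transfers to $H$-free graphs whenever $H'$ is an induced subgraph of $H$.

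Assume $H$ contains no cycle and no claw. Then $H$ is a forest of maximum degree at most $2$, so $H$ is a linear forest. If some component of $H$ has at least $6$ vertices, $H$ contains $P_6$. Otherwise every component is $P_s$ with $s\le 5$.
\begin{itemize}
\item If some component is $P_5$ and some other component has at least two vertices, then $H$ contains $P_5+P_2$.
\item If some component is $P_5$ and every other component is $P_1$, then $H=P_5+jP_1$ for some $j$. This contradicts our assumption.
\item If there is no $P_5$ component, every component is an induced subgraph of $P_4$. Hence $H$ is an induced subgraph of $kP_4$ with $k$ equal to the number of components of $H$, again a contradiction.
\end{itemize}

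This exhausts all cases and proves the trichotomy. The only step needing care is this last case analysis on linear forests, which is routine.
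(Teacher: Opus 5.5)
Your proof is correct and follows essentially the same route as the paper: you dispose of non-linear forests via the cycle/claw hardness of \citet{chiarelli2018minimum}, then split linear forests by whether some component has at least six vertices, is a $P_5$ (with or without a non-trivial companion component), or all components have at most four vertices, invoking \Cref{thm:literature-oct} and \Cref{thm:main_1,thm:main} accordingly. The only difference is organizational: you list the positive cases first and then show that every remaining $H$ contains $P_6$, $P_5+P_2$, a cycle, or a claw, whereas the paper starts directly from the linear-forest case split.
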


\begin{proof}
If $H$ is not a linear forest, then \textsc{Odd Cycle Transversal} is $\mathsf{NP}$-hard on $H$-free graphs by the result of \citet*{chiarelli2018minimum} recalled above. Since \textsc{Odd Cycle Transversal} is a special case of \textsc{Max-Weight List $2$-Colorable Induced Subgraph}, the same conclusion holds for the latter problem.

Suppose that $H$ is a linear forest. If every component of $H$ has at most four vertices, then $H$ is an induced subgraph of $kP_4$ for some $k\geq1$. In this case, the polynomial-time cases follow from \Cref{thm:literature-oct}, and all remaining cases are quasi-polynomial-time solvable by \Cref{thm:main_1,thm:main}. We may therefore assume that $H$ has a component on at least five vertices. If $H$ has a component on at least six vertices, then $H$ contains $P_6$ as an induced subgraph. Otherwise, $H$ has a component isomorphic to $P_5$. If every other component is isolated, then $H$ is an induced subgraph of $P_5+kP_1$ for some $k\geq1$; in every other case, $H$ contains $P_5+P_2$ as an induced subgraph. The hardness statements now follow from \Cref{thm:literature-oct}. These cases exhaust all possibilities.
\end{proof}

The proof of \Cref{thm:main_1} extends the approach used in \citep{GLMN25} to obtain a polynomial-time algorithm for the more general \textsc{Max-Weight List $r$-Colorable Induced Subgraph} ($r \geq 2$) on the more restricted class of $kP_3$-free graphs. The key ingredient in the proof of such result from \citep{GLMN25} is that every $kP_3$-free graph $G$ admits an amiable family of size polynomial in $|V(G)|$ and which can be computed in polynomial time. Here, an \defn{amiable family} of $G$ (a notion first introduced by \citet*{LM12}) is a family $\mathcal{S} \subseteq 2^{V(G)}$ of subsets of $V(G)$ such that each member of $\mathcal{S}$ induces a $P_3$-free subgraph in $G$ and each independent set of $G$ is contained in some member of $\mathcal{S}$. Unfortunately, $kP_4$-free graphs are unlikely to admit even quasi-polynomial-time computable amiable families of quasi-polynomial size. Indeed, paired with \citep[Lemma~9]{GLMN25}, this would in particular give a quasi-polynomial-time algorithm for \textsc{Max-Weight List $5$-Colorable
Induced Subgraph} on $(P_4 + P_2)$-free graphs, which we already recalled is $\mathsf{NP}$-hard. However, the following relaxation of the notion of amiable family suffices for our purposes. For a graph $G$, a family
$\mathcal{S} \subseteq 2^{V(G)}$
is an \defn{$H$-amiable family} if it satisfies the following properties:
\begin{itemize}
    \setlength\itemsep{0em}
\item each member of $\mathcal{S}$ induces an $H$-free graph in $G$; 
\item each independent set of $G$ is contained in some member of $\mathcal{S}$.
\end{itemize}

The following result is one of the main technical contributions of the paper and is the key ingredient in the proof of \Cref{thm:main_1}, as we explain in~\Cref{sec:overview}. The notation $O_{H,k}(\cdot)$ hides factors depending on $H$ and $k$.

\begin{restatable}{theorem}{HAF}\label{thm:HAF}
For every fixed graph $H$ with no isolated vertices and every fixed integer $k \geq 2$, every $n$-vertex $kH$-free graph $G$ admits an $H$-amiable family of size
$n^{O_{H,k}(\log n)}$,
which can be computed in time
$n^{O_{H,k}(\log n)}$.
\end{restatable}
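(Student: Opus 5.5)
The plan is a branching procedure driven by a counting measure: the number of induced copies of $H$. We grow the members of $\mathcal{S}$ one branch at a time. Throughout we maintain a pair $(R, A)$ with $R \subseteq V(G)$ and $A \subseteq V(G)\setminus R$. Here $A$ is an independent set of already ``accepted'' vertices, and no vertex of $A$ has a neighbor in $R$. The set $R$ holds the vertices still undecided. Initially $R=V(G)$ and $A=\emptyset$.

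Let $\mu(R)$ be the number of induced copies of $H$ in $G[R]$. Then $\mu(R)\le n^{|H|}$, and it can be computed in time $n^{O(|H|)}$. If $\mu(R)=0$, we output $S=R\cup A$. This set induces an $H$-free graph, because every vertex of $A$ is isolated in $G[R\cup A]$ and $H$ has no isolated vertices. Any copy of $H$ in $G[S]$ would therefore lie inside $R$, which is impossible when $\mu(R)=0$.

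\textbf{Key structural step.} We show that whenever $\mu(R)>0$ there is a vertex $v\in R$ whose closed neighborhood $N[v]\cap R$ meets at least $\varepsilon\,\mu(R)$ copies of $H$ in $G[R]$, where $\varepsilon=1/(k|H|)$.

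Suppose not. Take any copy $X_1$ in $G[R]$. The set $N[X_1]=\bigcup_{x\in X_1}N[x]$ meets fewer than $|H|\varepsilon\mu(R)$ copies. So some copy $X_2$ is disjoint from $N[X_1]$, which means $X_2$ is vertex-disjoint from and anticomplete to $X_1$. Iterating, $N[X_1\cup\dots\cup X_{j}]$ meets fewer than $j|H|\varepsilon\mu(R)\le \mu(R)$ copies for $j\le k-1$. Hence we can choose $X_{j+1}$ anticomplete to all earlier copies, for every $j\le k-1$. The copies $X_1,\dots,X_k$ then induce $kH$ in $G$, a contradiction. This is the only place where $kH$-freeness is used, and it is the step I expect to require the most care: one needs copies that are pairwise anticomplete, not merely disjoint, which is why closed neighborhoods are used.

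\textbf{Branching.} Given such a heavy vertex $v$, we branch in two ways.
\begin{itemize}
\item In the ``in'' branch, $v$ moves to $A$ and $N(v)\cap R$ is removed from $R$. This keeps $A$ independent and anticomplete to $R$. It destroys every copy meeting $N[v]$, so $\mu$ drops by a factor of at least $1-\varepsilon$.
\item In the ``out'' branch, $v$ is simply deleted from $R$.
\end{itemize}

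\textbf{Correctness.} Fix an independent set $I$ and follow the branch that puts $v$ in $A$ exactly when $v\in I$. We check that $I\subseteq R\cup A$ is preserved. In the ``in'' branch, the deleted vertices lie in $N(v)$ with $v\in I$, so they are not in $I$. In the ``out'' branch, the deleted vertex $v$ is not in $I$. Hence the leaf reached contains $I$, and $\mathcal{S}$ is an $H$-amiable family.

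\textbf{Size and time.} Along any root-to-leaf path, the number of ``in'' steps is at most $\log_{1/(1-\varepsilon)} n^{|H|}=O_{H,k}(\log n)$. The total number of steps is at most $n$, since each step removes at least one vertex from $R$. To make the leaf count explicit, fix a canonical rule for choosing $v$ (for instance, the first heavy vertex in a fixed order). Then a leaf is determined by the positions of its ``in'' steps among at most $n$ steps. This gives at most $\sum_{d\le c\log n}\binom{n}{d}=n^{O_{H,k}(\log n)}$ leaves.

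Each node takes $n^{O(|H|)}$ time to process, so the total running time is $n^{O_{H,k}(\log n)}$. The hypothesis $k\ge 2$ plays no essential role beyond making the statement nontrivial. For $k=1$ the graph is $H$-free and $\{V(G)\}$ already works.
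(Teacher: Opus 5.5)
Your proof is correct and follows essentially the paper's route: in/out branching on a vertex whose closed neighborhood meets a constant fraction of the induced copies of $H$. Your greedy construction of pairwise anticomplete copies is the same union-bound argument as the paper's choice of a maximum induced $jH$, giving $\varepsilon=1/(k|H|)$ instead of $1/((k-1)|H|)$. The only real difference is the leaf count. The paper gives score zero to vertices outside every copy of $H$, so that the out-branch also lowers $\mu$, and then solves $T(m)\le T(m-1)+T(\lfloor cm\rfloor)$. You instead bound the depth by $n$ and encode each leaf by the $O_{H,k}(\log n)$ positions of its in-steps; this encoding is injective because root-to-leaf paths are prefix-free, a point worth stating explicitly. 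Your counting avoids that restriction on the chosen vertex, at the price of a bound in terms of $n$ rather than $\mu(G)$.
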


In related work,~\cite{nikabadi2026induced} introduced
induced-$\mathcal{H}$-packing treewidth. We observe that their Lemma~4.1, \say{container
lemma}, also
yields~\Cref{thm:HAF}. Indeed, apply that lemma with
$\mathcal{H}=\{H\}$, with distinguished set $B=V(G)$, and with
parameter $k-1$. Its packing hypothesis is satisfied, since $k$
pairwise anticomplete induced copies of $H$ in $G$ would induce
$kH$. The lemma computes, in time $n^{O_{H,k}(\log n)}$, a family
$\mathcal{P}$ of size $n^{O_{H,k}(\log n)}$ consisting of pairs
$(I,F)$ such that $I$ is independent,
$
F\subseteq V(G)\setminus N_G[I],
$
and $G[F]$ is $H$-free. Moreover, for every independent set
$I^\star$ of $G$, there exists $(I,F)\in\mathcal{P}$ such that
$
I\subseteq I^\star\subseteq I\cup F.
$
Consequently, the family
$
\{I\cup F:(I,F)\in\mathcal{P}\}
$
covers every independent set of $G$. For each
$(I,F)\in\mathcal{P}$, every vertex of $I$ is isolated in
$G[I\cup F]$. Since $H$ has no isolated vertices and $G[F]$ is
$H$-free, it follows that $G[I\cup F]$ is $H$-free. Thus these sets
form an $H$-amiable family. We nevertheless give a direct proof for~\Cref{thm:HAF} in~\Cref{sec:building-h-af}, which is much shorter in the present
$kH$-free setting.

\medskip

\Cref{thm:HAF} has another interesting application, as we explain next. Let $\mathcal{F}$ be the class of forests whose components are either paths or subdivided claws. It is well known that \textsc{Max-Weight Independent Set} on $H$-free graphs remains $\mathsf{NP}$-hard unless $H \in \mathcal{F}$, and classifying the complexity for $H \in \mathcal{F}$ is arguably one of the main open problems in algorithmic graph theory, dating back to the early 1980s. It is widely believed, and formally conjectured in \citep{GLMPPR24,L17}, that the problem is polynomial-time solvable on $H$-free graphs for every $H \in \mathcal{F}$. 
Note that every graph in $\mathcal{F}$ is an induced subgraph of $tS_{t,t,t}$ for some $t \in \mathbb{N}$; here, for $a,b,c \in \mathbb{N}$, $S_{a,b,c}$ is the graph obtained from the claw by subdividing its three edges $a-1$, $b-1$, and $c-1$ times, respectively.

\medskip

Despite a large number\footnote{We refer the reader to \citep{CMPPR24,GLMPPR24} for a summary of known results.} of polynomial-time algorithms for special $H \in \mathcal{F}$, the conjecture remains open and the best general result is a recent $n^{O_H(\log^{19} n)}$-time algorithm for every $H \in \mathcal{F}$ by \citet*{GLMPPR24}. This extended the earlier breakthrough of \citet*{GL20} that \textsc{Max-Weight Independent Set} admits an $n^{O_H(\log^{6} n)}$-time algorithm on $H$-free graphs when $H$ is the disjoint union of a path $P_t$ and $t-1$ copies of $S_{1,1,2}$, later simplified and improved to an $n^{O_H(\log^{5} n)}$-time algorithm by \citet*{PPR21}. A key ingredient underlying these algorithms is a reduction of \citet{GL20} from the case of arbitrary $H$ to the case where $H$ is connected, as we explain next. 

For a graph $H$, let $\mathsf{Or}(H)$ be a value oracle that takes as input an $H$-free graph $F$ together with a weight function $w_F\colon V(F)\to\mathbb{Q}_{\geq0}$ and returns the maximum weight of an independent set of $F$ with respect to $w_F$.
\citet*[Theorem~2]{GL20} provided an algorithm for \textsc{Max-Weight Independent Set} on $H$-free graphs that, given access to oracles $\mathsf{Or}(H_i)$ for all connected components $H_i$ of $H$, uses at most $n^{O_H(\log^3 n)}$ operations and oracle calls on induced subgraphs of the $n$-vertex input graph. This reduction is applied in \citep{GLMPPR24} to the case $H = tS_{t,t,t}$, for $t \in \mathbb{N}$. When the forbidden graph is $kH$ for a connected graph $H$, \Cref{thm:HAF} gives the following sharper reduction.

\begin{theorem}\label{oracle reduction}
Let $H$ be a connected graph and let $k\geq2$. Given an oracle $\mathsf{Or}(H)$, one can solve \textsc{Max-Weight Independent Set} on $kH$-free graphs using at most $n^{O_{H,k}(\log n)}$ additional operations and oracle calls on induced subgraphs of the $n$-vertex input graph.
\end{theorem}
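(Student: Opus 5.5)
Since every independent set of $G$ is contained in some member of an $H$-amiable family, the reduction is direct. We build such a family with \Cref{thm:HAF} and call the oracle once on each member.

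First, the case where $H$ has an isolated vertex. As $H$ is connected, this means $H=P_1$. Then $kH=kP_1$, so a $kP_1$-free graph has independence number less than $k$. In that case we enumerate all independent sets of size at most $k-1$ in time $n^{O(k)}$ and output the heaviest one, with no oracle calls needed. From now on $H$ is connected on at least two vertices, so it has no isolated vertices and \Cref{thm:HAF} applies.

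Given the $n$-vertex $kH$-free graph $G$ with weights $w$, we apply \Cref{thm:HAF} and obtain an $H$-amiable family $\mathcal{S}$ with $|\mathcal{S}|=n^{O_{H,k}(\log n)}$, computed within the same time bound. For each $S\in\mathcal{S}$, the graph $G[S]$ is an induced subgraph of $G$, and it is $H$-free by definition of $H$-amiable family. So we may call $\mathsf{Or}(H)$ on $G[S]$ with weights $w|_S$, and it returns $\alpha_w(G[S])$, the maximum weight of an independent set of $G[S]$. We output $\max_{S\in\mathcal{S}}\alpha_w(G[S])$.

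Correctness follows from two inequalities. Every independent set of $G[S]$ is independent in $G$, so each oracle value is at most $\alpha_w(G)$. Conversely, a maximum-weight independent set $I^\star$ of $G$ is contained in some $S\in\mathcal{S}$ by the covering property, so $\alpha_w(G[S])\ge w(I^\star)=\alpha_w(G)$. Hence the maximum equals $\alpha_w(G)$.

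The work is $n^{O_{H,k}(\log n)}$ operations to build $\mathcal{S}$, plus $|\mathcal{S}|$ oracle calls, plus polynomial bookkeeping per call. If an actual solution set is needed, not just the value, we use standard self-reduction: decide each vertex in turn by comparing optimum values after deleting the vertex versus after deleting its closed neighborhood. This costs a further factor of $n$, and the intermediate graphs remain $kH$-free because they are induced subgraphs. There is no real obstacle here: all the difficulty lies in \Cref{thm:HAF}, and the only point needing care is the degenerate case $H=P_1$, which the theorem excludes.
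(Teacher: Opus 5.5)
Your proposal is correct and follows the paper's proof: the same $H=K_1$ base case handled by brute force, the same use of the $H$-amiable family from \Cref{thm:HAF} with oracle calls on each $H$-free member $G[S]$, and the same two-sided covering argument for correctness. The only difference is in recovering an optimal set. You self-reduce at the level of the $kH$-free graph, rerunning the whole procedure on induced subgraphs, whereas the paper self-reduces inside each $G[S]$ with at most $|S|+1$ oracle calls; both stay within $n^{O_{H,k}(\log n)}$.
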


\begin{proof}
Let $G$ be an $n$-vertex $kH$-free graph with weight function $w\colon V(G)\to\mathbb{Q}_{\geq0}$. If $H=K_1$, then every independent set of $G$ has size at most $k-1$, and a maximum-weight independent set can be found in $n^{O(k)}$ time by enumerating all vertex subsets of size at most $k-1$.

Assume that $|V(H)|\geq2$. Since $H$ is connected, it has no isolated vertices. By \Cref{thm:HAF}, we can compute an $H$-amiable family $\mathcal{S}$ of $G$ of size $n^{O_{H,k}(\log n)}$ within the same running time. For every $S\in\mathcal{S}$, a standard vertex-deletion self-reduction using $\mathsf{Or}(H)$ recovers a maximum-weight independent set $I_S$ of $G[S]$ with at most $|S|+1$ oracle calls on induced subgraphs of $G[S]$, which are $H$-free, each supplied with the restriction of $w$ to its vertex set. Return a set $I_S$ of maximum weight among these sets.

Every set $I_S$ is an independent set of $G$. Conversely, every independent set $I$ of $G$ is contained in some member $S$ of $\mathcal{S}$, and therefore $w(I_S)\geq w(I)$. Thus the returned set is a maximum-weight independent set of $G$. The additional factor of $n$ caused by the self-reduction is absorbed by $n^{O_{H,k}(\log n)}$, so the number of additional operations and oracle calls satisfies the claimed bound.
\end{proof}

\paragraph{Organization of the paper.} In~\Cref{sec:overview}, we give an overview of the proofs of our results. In \Cref{sec:prelim}, we introduce our notation and recall standard definitions. In \Cref{sec:building-h-af}, we prove \Cref{thm:HAF} by a direct branching argument. In \Cref{sec:algo}, we present the proof of \Cref{thm:main_1} in full detail. Finally, we conclude in \Cref{sec:conclusion} with some additional remarks and open problems.

\subsection{Technical overview}\label{sec:overview}

In this subsection, we provide a high-level description of our approach in proving our two main theorems. 

\paragraph{Amiable families (\Cref{thm:HAF}).} For an induced subgraph $F$ of the input graph, let $\mu(F)$ be the number of induced copies of $H$ in $F$. A vertex that belongs to an induced copy of $H$ receives as its score the number of copies of $H$ intersecting its closed neighborhood; all other vertices receive score zero. We branch on a vertex $v_F$ of maximum score. If $R$ induces $jH$ for the largest possible $j$, then $j\leq k-1$, and every induced copy of $H$ intersects $N_F[R]$, since otherwise it forms an induced copy of $(j+1)H$ together with $R$. Averaging over the at most $(k-1)|V(H)|$ vertices of $R$ shows that $N_F[v_F]$ intersects at least a $1/((k-1)|V(H)|)$ fraction of all induced copies of $H$.

The first branch continues in $F-v_F$ and covers every independent set avoiding $v_F$. The second branch continues in $F-N_F[v_F]$ and adds $v_F$ to each recursively generated set; it covers every independent set containing $v_F$, as all its remaining vertices lie outside $N_F[v_F]$. Since $v_F$ belongs to an induced copy of $H$, the first branch destroys at least one copy of $H$, whereas the second branch destroys a constant fraction of all copies of $H$. Consequently, if $T(m)$ is an upper bound on the number of leaves generated from a graph containing at most $m$ induced copies of $H$, then we have
\[
T(m)\leq T(m-1)+T\left(\left\lfloor\left(1-\frac{1}{(k-1)|V(H)|}\right)m\right\rfloor\right).
\]
This recurrence gives $T(m)\leq\exp(O_{H,k}(\log^2(m+1)))$. Every set generated by the first branch is $H$-free by induction. In the second branch, the added vertex has no neighbor in the recursively generated set; since $H$ has no isolated vertices, adding this vertex also preserves $H$-freeness. The recursion therefore constructs, in time
$n^{O_{H,k}(\log n)}$, an $H$-amiable family of size
$n^{O_{H,k}(\log n)}$.

\paragraph{\textsc{Max-Weight List $2$-Colorable Induced Subgraph} (\Cref{thm:main_1}).} The proof proceeds by reducing the problem to a sequence of maximum-weight independent set computations on perfect graphs. The key ingredient is a so-called \emph{two-layer auxiliary graph} that encodes list colorings as stable sets: given two subsets of vertices $A_1, A_2$ of a graph $G$, the auxiliary graph contains a copy of each vertex in $A_1$ ($A_2$, respectively) whose list contains color $1$ ($2$, respectively). For $i \in \{1,2\}$, the $i$-th layer of the auxiliary graph is isomorphic to $G[A_i\cap V_i(\mathcal{L})]$ to enforce that every color class is independent, and cross-edges are added in the auxiliary graph between two copies of the same vertex to prevent assigning two colors to one vertex. 

We first establish a weight-preserving bijection between feasible list colorings and stable sets of this auxiliary graph (see \Cref{lem:list2-assignment-correspondence-vertex}). The main structural step is then to prove that whenever the two subsets are restricted to induce $P_4$-free graphs, the auxiliary graph is perfect (see \Cref{lem:list2-aux-perfect}). The algorithm then uses the $P_4$-amiable family constructed in \Cref{thm:HAF} to enumerate quasi-polynomially many pairs of subsets, computes for each such pair a maximum-weight stable set in the corresponding auxiliary graph guaranteed to be perfect (using, e.g., the algorithm of \citet*{grotschel1981ellipsoid}) and finally outputs the largest found solution (see \Cref{alg:list2}). Finally, we prove that one of the enumerated pairs captures an optimal solution, implying correctness, while the quasi-polynomial runtime follows from the size of the amiable family and the polynomial work performed for each pair (see \Cref{thm:list2-kp4}). We refer to \Cref{sec:algo} for a complete proof of \Cref{thm:main_1}.

\section{Preliminaries}\label{sec:prelim}

We denote the set of positive integers by $\mathbb{N}$. For every $n\in \mathbb{N}$, we let $[n]:= \{1,\dots, n\}$.

Throughout the paper, graphs have finite vertex sets and no loops or parallel edges. Let $G$ be a~graph with vertex set $V(G)$ and edge set $E(G)$.
For $X \subseteq V(G)$, we denote the subgraph of $G$ \defn{induced} by $X$ as $G[X]$, that is $G[X] = (X, \{uv : u,v \in X \ \mbox{and} \ uv \in E(G)\})$, and we sometimes use $X$ to denote both the vertex subset of $G$ and the subgraph of $G$ it induces.

For a vertex $v \in V(G)$, we denote by $N_G(v)$ the open neighborhood of $v$ and by $N_G[v] := N_G(v) \cup \{v\}$ the closed neighborhood of $v$. For a set $X \subseteq V(G)$, we let
    $N_G(X) := \Bigl(\bigcup_{x \in X} N_G(x)\Bigr)\setminus X$ and
    $N_G[X] := N_G(X)\cup X$.
For disjoint sets $X, Y \subseteq V(G)$, we say that $X$ is \defn{complete} to $Y$ if every vertex in $X$ is adjacent to every vertex in $Y$, and $X$ is \defn{anticomplete} to $Y$ if there are no edges between $X$ and $Y$. A \defn{clique} in $G$ is a set of pairwise adjacent
vertices. A \defn{stable set} (or \defn{independent set}) in $G$ is a set of pairwise non-adjacent vertices.

For graphs $G$ and $G'$, an \defn{isomorphism} from $G$ to $G'$ is a bijection $\varphi \colon V(G) \to V(G')$ such that for all $x,y \in V(G)$, $xy \in E(G)$ if and only if $\varphi(x)\varphi(y) \in E(G')$; with a slight abuse of notation, we typically write $\varphi \colon G \to G'$ instead of $\varphi \colon V(G) \to V(G')$.


\section{Building $H$-amiable families}\label{sec:building-h-af}

This section is devoted to the proof of~\Cref{thm:HAF}.

\HAF*

\begin{proof}
Fix a graph $H$ with no isolated vertices and an integer $k\geq2$, and let $G$ be an $n$-vertex $kH$-free graph. We may assume that $H$ is nonempty. Let $h:=|V(H)|$; since $H$ has no isolated vertices, $h\geq2$. For every induced subgraph $F$ of $G$, let
\[
\mathcal{H}(F):=\{X\subseteq V(F):F[X]\simeq H\}
\qquad\text{and}\qquad
\mu(F):=|\mathcal{H}(F)|.
\]
For $v\in V(F)$, define
\[
s_F(v):=
\begin{cases}
|\{X\in\mathcal{H}(F):X\cap N_F[v]\neq\emptyset\}|,
& \text{if $v$ belongs to some member of $\mathcal{H}(F)$},\\[1mm]
0, & \text{otherwise}.
\end{cases}
\]

We recursively define a family $\Upsilon(F)\subseteq2^{V(F)}$. If $F$ is $H$-free, let $\Upsilon(F):=\{V(F)\}$. Otherwise, let $v_F$ be a vertex of maximum score and let
\[
\Upsilon(F):=\Upsilon(F-v_F)\cup
\bigl\{\{v_F\}\cup S:S\in\Upsilon(F-N_F[v_F])\bigr\}.
\]
Both recursive calls are made on proper induced subgraphs of $F$, and hence the recursion terminates.

We first establish how the measure $\mu$ decreases. Let $F$ be an induced subgraph of $G$ that is not $H$-free. Choose $j$ as large as possible such that $F$ contains an induced copy of $jH$, and let $R\subseteq V(F)$ induce such a copy. Since $F$ is $kH$-free,
$1\leq j\leq k-1$ and 
$|R|=jh\leq(k-1)h.$
Every $X\in\mathcal{H}(F)$ intersects $N_F[R]$. Indeed, if $X\cap N_F[R]=\emptyset$, then $X$ is disjoint from and anticomplete to $R$, and hence $F[R\cup X]\simeq(j+1)H$, contrary to the choice of $j$. Thus every member of $\mathcal{H}(F)$ is counted by $s_F(u)$ for at least one $u\in R$. Since every vertex of $R$ belongs to an induced copy of $H$, we obtain
\[
\mu(F)\leq\sum_{u\in R}s_F(u)
\leq |R|s_F(v_F)
\leq(k-1)h\,s_F(v_F).
\]
Consequently,
\[
s_F(v_F)\geq\frac{\mu(F)}{(k-1)h}.
\]
In particular, $s_F(v_F)>0$, so $v_F$ belongs to an induced copy of $H$ in $F$. Therefore
\[
\mu(F-v_F)\leq\mu(F)-1.
\]
The copies of $H$ that survive in $F-N_F[v_F]$ are exactly the members of $\mathcal{H}(F)$ not counted by $s_F(v_F)$. Hence, for
\[
c:=1-\frac{1}{(k-1)h},
\]
we have $0<c<1$ and
\[
\mu(F-N_F[v_F])=\mu(F)-s_F(v_F)\leq c\mu(F).
\]

We next prove by induction on $|V(F)|$ that every member of $\Upsilon(F)$ induces an $H$-free subgraph of $F$. If $F$ is $H$-free, then $\Upsilon(F)=\{V(F)\}$, and the assertion is immediate. Suppose that $F$ is not $H$-free. By induction, every member of $\Upsilon(F-v_F)$ induces an $H$-free subgraph of $F-v_F$, and hence of $F$. Now let $S\in\Upsilon(F-N_F[v_F])$. The graph $F[S]$ is $H$-free by induction, and $v_F$ has no neighbor in $S$. Thus $v_F$ is isolated in $F[S\cup\{v_F\}]$. Since $H$ has no isolated vertices, an induced copy of $H$ in $F[S\cup\{v_F\}]$ cannot contain $v_F$ and would therefore be contained in $S$, a contradiction. Hence every member of $\Upsilon(F)$ induces an $H$-free subgraph of $F$.

We now prove by induction on $|V(F)|$ that every independent set of $F$ is contained in some member of $\Upsilon(F)$. If $F$ is $H$-free, then $\Upsilon(F)=\{V(F)\}$, and the assertion is immediate. Suppose that $F$ is not $H$-free, and let $I$ be an independent set of $F$.

If $v_F\notin I$, then $I$ is an independent set of $F-v_F$. By induction, there exists $S\in\Upsilon(F-v_F)$ such that $I\subseteq S$. Since $\Upsilon(F-v_F)\subseteq\Upsilon(F)$, this gives a member of $\Upsilon(F)$ containing $I$.

Suppose that $v_F\in I$. Since $I$ is independent,
\[
I\setminus\{v_F\}\subseteq V(F)\setminus N_F[v_F]
=V(F-N_F[v_F]).
\]
Thus $I\setminus\{v_F\}$ is an independent set of $F-N_F[v_F]$. By induction, there exists $S\in\Upsilon(F-N_F[v_F])$ such that $I\setminus\{v_F\}\subseteq S$. By the definition of $\Upsilon(F)$, the set $S\cup\{v_F\}$ belongs to $\Upsilon(F)$ and contains $I$. Therefore every independent set of $F$ is contained in some member of $\Upsilon(F)$.
The preceding two properties show that $\Upsilon(F)$ is an $H$-amiable family of $F$.

It remains to bound the size of this family. Let $a:=(k-1)h$, so that $c=1-1/a$. Define $T(0):=1$ and, for every integer $m\geq1$, define
\[
T(m):=T(m-1)+T(\lfloor cm\rfloor).
\]
Since $\lfloor cm\rfloor\leq m-1$, the sequence is well-defined and strictly increasing. Let $L(F)$ be the number of leaves in the recursion tree rooted at $F$. We show by induction on $\mu(F)$ that
$
L(F)\leq T(\mu(F)).
$
If $\mu(F)=0$, then $F$ is $H$-free and $L(F)=1=T(0)$. Suppose that $m:=\mu(F)\geq1$. The two recursive subgraphs have measure at most $m-1$ and $\lfloor cm\rfloor$, respectively. Hence, by induction and the monotonicity of $T$,
\[
L(F)\leq T(m-1)+T(\lfloor cm\rfloor)=T(m).
\]
Moreover,
\[
T(m)=T(0)+\sum_{i=1}^{m}T(\lfloor ci\rfloor)
\leq1+mT(\lfloor cm\rfloor)
\leq(m+1)T(\lfloor cm\rfloor).
\]
For $m\geq1$, let $m_0:=m$ and $m_{r+1}:=\lfloor cm_r\rfloor$. Since $m_r\leq c^r m$, the first index $q$ for which $m_q=0$ satisfies $q=O_{H,k}(\log(m+1))$. Applying the preceding inequality to $m_0,\ldots,m_{q-1}$ yields
\[
T(m)\leq\prod_{r=0}^{q-1}(m_r+1)
\leq(m+1)^q
=\exp\bigl(O_{H,k}(\log^2(m+1))\bigr).
\]
Since $\mu(G)\leq\binom{n}{h}\leq n^h$, the conclusion is immediate when $n=1$, while for $n\geq2$ we have $\log(\mu(G)+1)=O_H(\log n)$. It follows that
\[
|\Upsilon(G)|\leq L(G)\leq n^{O_{H,k}(\log n)}.
\]

We finally describe an implementation attaining the same running time. Construct the binary recursion tree lazily. A node consists of a pair $(F,Z)$, where $F$ is the current induced subgraph and $Z$ is the set of vertices selected on second branches along the path from the root; the root is $(G,\emptyset)$. If $F$ is $H$-free, output $Z\cup V(F)$. Otherwise, compute $v_F$ and create the two children
\[
(F-v_F,Z)
\qquad\text{and}\qquad
(F-N_F[v_F],Z\cup\{v_F\}).
\]
An induction on the corresponding subtree shows that the set of distinct leaf outputs below a node $(F,Z)$ is
$
\{Z\cup S:S\in\Upsilon(F)\}.
$
In particular, the distinct outputs at the leaves of the whole recursion tree form precisely $\Upsilon(G)$.

At a node corresponding to $F$, the family $\mathcal{H}(F)$ and all scores can be computed in time $n^{O_H(1)}$ by enumerating the $h$-vertex subsets of $V(F)$. No internal node constructs or copies a family returned by a recursive call. Since the recursion tree is binary and has at most $T(\mu(G))$ leaves, it has at most $2T(\mu(G))-1$ nodes. Constructing each leaf output and removing repeated outputs take time polynomial in $n$ per generated set. Thus the total running time is $n^{O_{H,k}(\log n)}$. This completes the proof of~\Cref{thm:HAF}.
\end{proof}


\section{The proof of~\Cref{thm:main_1}}\label{sec:algo}

In this section, we provide a quasi-polynomial-time algorithm for \textsc{Max-Weight List $2$-Colorable Induced Subgraph} on $kP_4$-free graphs. Recall that an instance of this problem consists of a graph $G$, a weight function $w\colon V(G)\to\mathbb{Q}_{\geq 0}$, and a $2$-list assignment $\mathcal{L}\colon V(G)\to 2^{\{1,2\}}$. The task is to find a maximum-weight subset $Y\subseteq V(G)$ such that $G[Y]$ admits a coloring $\varphi\colon Y\to\{1,2\}$ with $\varphi(v)\in\mathcal{L}(v)$ for every $v\in Y$. For a $2$-list assignment $\mathcal{L}$ and $i\in\{1,2\}$, we write 
\[
V_i(\mathcal{L}):=\{v\in V(G): i\in\mathcal{L}(v)\}.
\]
For a graph $G$, a $2$-list assignment $\mathcal{L}$, and a vertex-weight function $w$ as above, we define $\lambda_2(G,\mathcal{L},w)$ to be the maximum value of $w(Y)$ over all subsets $Y\subseteq V(G)$ such that $G[Y]$ admits an $\mathcal{L}$-coloring.

We introduce the auxiliary graph used for the fixed-container problem. Let $G$ be a graph, let $\mathcal{L}\colon V(G)\to 2^{\{1,2\}}$ be a $2$-list assignment, and let $A_1,A_2\subseteq V(G)$ be vertex subsets, called \defn{containers}. We define the \defn{two-layer list auxiliary graph $\mathcal{Q}_{\mathcal{L}}(G;A_1,A_2)$} as follows. 

\begin{itemize}
\setlength\itemsep{0em}
    \item Its vertex set is $\{v^i : i\in\{1,2\},\ v\in A_i,\ \text{and } i\in\mathcal{L}(v)\}$.
    \item Two distinct vertices $u^i$ and $v^j$ are adjacent if and only if either $u=v$ and $i\neq j$, or $u\neq v$, $i=j$, and $uv\in E(G)$.
\end{itemize}

We abbreviate $\mathcal{Q}_{\mathcal{L}}(G;V(G),V(G))$ to $\mathcal{Q}_{\mathcal{L}}(G)$.
The vertices $v^1$ form the \defn{first layer} and the vertices $v^2$ form the \defn{second layer}. Edges inside one layer copy the corresponding edges of $G$, while edges of the form $v^1v^2$, called \defn{cross-edges}, forbid choosing two colors for the same original vertex. 

Given a weight function $\eta\colon V(\mathcal{Q}_{\mathcal{L}}(G;A_1,A_2))\to\mathbb{Q}$, let $\alpha_\eta(\mathcal{Q}_{\mathcal{L}}(G;A_1,A_2))$ denote the maximum $\eta$-weight of a stable set in $\mathcal{Q}_{\mathcal{L}}(G;A_1,A_2)$. For a weight function $w\colon V(G)\to\mathbb{Q}_{\geq 0}$, define the \defn{copied weight function} $\widehat{w}$ on $V(\mathcal{Q}_{\mathcal{L}}(G;A_1,A_2))$ by $\widehat{w}(v^i):=w(v)$ for every vertex $v^i$ of $\mathcal{Q}_{\mathcal{L}}(G;A_1,A_2)$.

\begin{lemma}\label{lem:list2-assignment-correspondence-vertex}
Let $G$ be a graph, let $\mathcal{L}\colon V(G)\to 2^{\{1,2\}}$ be a $2$-list assignment, let $A_1,A_2\subseteq V(G)$, and let $w\colon V(G)\to\mathbb{Q}_{\geq 0}$ be a weight function. Then stable sets of $\mathcal{Q}_{\mathcal{L}}(G;A_1,A_2)$ are in weight-preserving one-to-one correspondence with pairs $(Y,\varphi)$ such that $Y\subseteq V(G)$, $\varphi\colon Y\to\{1,2\}$ is an $\mathcal{L}$-coloring of $G[Y]$, and $\varphi^{-1}(i)\subseteq A_i$ for $i\in\{1,2\}$. In particular, $\lambda_2(G,\mathcal{L},w)=\alpha_{\widehat w}(\mathcal{Q}_{\mathcal{L}}(G))$.
\end{lemma}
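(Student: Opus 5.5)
We will define the two maps explicitly and check that they are mutually inverse and weight-preserving.

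\emph{From stable sets to colorings.} Given a stable set $S$ of $\mathcal{Q}_{\mathcal{L}}(G;A_1,A_2)$, let $Y:=\{v\in V(G): v^i\in S \text{ for some } i\}$. Because the cross-edge $v^1v^2$ is present whenever both copies exist, $S$ contains at most one copy of each vertex. We may therefore define $\varphi(v)$ as the unique $i$ with $v^i\in S$. We then check three things. First, $v^i$ exists only when $v\in A_i$ and $i\in\mathcal{L}(v)$, so $\varphi$ respects $\mathcal{L}$ and $\varphi^{-1}(i)\subseteq A_i$. Second, if $uv\in E(G)$ with $u,v\in Y$ and $\varphi(u)=\varphi(v)=i$, then $u^i v^i$ is an edge of the $i$-th layer, contradicting stability; hence $\varphi$ is a proper coloring of $G[Y]$. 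Third, $\widehat w(S)=\sum_{v\in Y}w(v)=w(Y)$, since each $v\in Y$ contributes exactly one copy.

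\emph{From colorings to stable sets.} Given $(Y,\varphi)$ as in the statement, set $S:=\{v^{\varphi(v)}: v\in Y\}$. These vertices exist because $\varphi(v)\in\mathcal{L}(v)$ and $v\in\varphi^{-1}(\varphi(v))\subseteq A_{\varphi(v)}$. To see that $S$ is stable, consider two of its vertices $u^{\varphi(u)}$ and $v^{\varphi(v)}$ with $u\neq v$, and run through the adjacency definition. A cross-edge would require $u=v$, which is excluded. A layer edge would require $\varphi(u)=\varphi(v)$ and $uv\in E(G)$, which is impossible because $\varphi$ is proper. Finally, the two constructions are clearly inverse to each other, which gives the bijection.

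\emph{The equality.} Take $A_1=A_2=V(G)$. Then the constraint $\varphi^{-1}(i)\subseteq A_i$ is vacuous, so the pairs $(Y,\varphi)$ range over all $\mathcal{L}$-colorable induced subgraphs together with a witnessing coloring. Since $w(Y)$ does not depend on $\varphi$, maximizing over pairs is the same as maximizing over sets $Y$. Hence $\lambda_2(G,\mathcal{L},w)=\alpha_{\widehat w}(\mathcal{Q}_{\mathcal{L}}(G))$.

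There is no real obstacle in this argument. The only care needed is to note that a cross-edge appears exactly when both copies of a vertex exist, which is what ensures that $\varphi$ is well-defined.
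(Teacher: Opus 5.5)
Your proposal is correct and follows essentially the same route as the paper: you construct the two explicit maps, use the cross-edges for well-definedness of $\varphi$ and the layer edges for properness, note that the maps are mutually inverse and weight-preserving, and then specialize to $A_1=A_2=V(G)$ to get $\lambda_2(G,\mathcal{L},w)=\alpha_{\widehat w}(\mathcal{Q}_{\mathcal{L}}(G))$.
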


\begin{proof}
Let $J$ be a stable set of $\mathcal{Q}_{\mathcal{L}}(G;A_1,A_2)$. Let
\[
Y_J:=\{v\in V(G): v^i\in J \text{ for some } i\in\{1,2\}\}.
\]
We first show that, for every $v\in Y_J$, there is a unique color $i\in\{1,2\}$ such that $v^i\in J$. Indeed, it cannot be that both $v^1$ and $v^2$ belong to $J$, since  $v^1v^2$ is an edge of $\mathcal{Q}_{\mathcal{L}}(G;A_1,A_2)$ by the definition of list auxiliary graph. We may therefore define a map $\varphi_J\colon Y_J\to\{1,2\}$ by setting $\varphi_J(v)$ to be the unique color $i$ such that $v^i\in J$.

We claim that $\varphi_J$ is an $\mathcal{L}$-coloring of $G[Y_J]$. Let $v\in Y_J$, and let $\varphi_J(v)=i$. Since $v^i$ is a vertex of $\mathcal{Q}_{\mathcal{L}}(G;A_1,A_2)$, we have $i\in\mathcal{L}(v)$ and $v\in A_i$. Thus $\varphi_J$ respects the lists and satisfies $\varphi_J^{-1}(i)\subseteq A_i$ for $i\in\{1,2\}$. Now suppose, to the contrary, that there exist distinct  vertices $u,v\in Y_J$ adjacent in $G$ and with $\varphi_J(u)=\varphi_J(v)=i$. Then $u^i,v^i\in J$. Since $uv\in E(G)$ and $u\neq v$, the definition of $\mathcal{Q}_{\mathcal{L}}(G;A_1,A_2)$ gives the edge $u^iv^i$, contradicting that $J$ is stable. Therefore $\varphi_J$ is an $\mathcal{L}$-coloring of $G[Y_J]$.

Conversely, let $(Y,\varphi)$ be a pair such that $Y\subseteq V(G)$, $\varphi\colon Y\to\{1,2\}$ is an $\mathcal{L}$-coloring of $G[Y]$, and $\varphi^{-1}(i)\subseteq A_i$ for $i\in\{1,2\}$. Define $J_{Y,\varphi}:=\{v^{\varphi(v)}:v\in Y\}$. This is a subset of $V(\mathcal{Q}_{\mathcal{L}}(G;A_1,A_2))$, because for every $v\in Y$, the color $\varphi(v)$ belongs to $\mathcal{L}(v)$ and $v\in A_{\varphi(v)}$.

We now claim that $J_{Y,\varphi}$ is a stable set of $\mathcal{Q}_{\mathcal{L}}(G;A_1,A_2)$. First, there is no edge of the form $v^1v^2$ with both endpoints in $J_{Y,\varphi}$, because the construction of $J_{Y,\varphi}$ selects exactly one copy of each vertex $v\in Y$. Second, suppose that $u^i$ and $v^i$ are two distinct vertices of $J_{Y,\varphi}$. Then $\varphi(u)=\varphi(v)=i$. Since $\varphi$ is a coloring of $G[Y]$, we have $uv\notin E(G)$. Hence, by definition of $\mathcal{Q}_{\mathcal{L}}(G;A_1,A_2)$, $u^i$ and $v^i$ are not adjacent in $\mathcal{Q}_{\mathcal{L}}(G;A_1,A_2)$. We conclude that $J_{Y,\varphi}$ is indeed stable.

The two constructions are inverse to each other. Starting with a stable set $J$, the set obtained from $(Y_J,\varphi_J)$ is precisely $\{v^{\varphi_J(v)}:v\in Y_J\}=J$. Conversely, starting with a feasible pair $(Y,\varphi)$, the coloring recovered from $J_{Y,\varphi}$ has domain $Y$ and assigns each vertex $v\in Y$ the color $\varphi(v)$.

Finally, the correspondence is weight-preserving for the copied weight function. For every stable set $J$, we have $\widehat{w}(J)=\sum_{v^i\in J}w(v)=w(Y_J)$. Therefore, in the special case $A_1=A_2=V(G)$, maximizing $\widehat{w}(J)$ over stable sets of $\mathcal{Q}_{\mathcal{L}}(G)$ is exactly the same as maximizing $w(Y)$ over all induced subgraphs $G[Y]$ admitting an $\mathcal{L}$-coloring. Hence $\lambda_2(G,\mathcal{L},w)=\alpha_{\widehat w}(\mathcal{Q}_{\mathcal{L}}(G))$.
\end{proof}

Recall that a graph $G$ is \defn{perfect} if every induced subgraph $F$ of $G$ satisfies $\chi(F)=\omega(F)$. A \defn{hole} of $G$ is an induced subgraph of $G$ which is a cycle of length at least four, an \defn{antihole} of $G$ is an induced subgraph of $G$ whose complement is a hole of $\overline{G}$, and a graph is \defn{Berge} if it contains no odd hole and no odd antihole. 

Our proof of \Cref{thm:main_1} will use two deep results in structural and algorithmic graph theory. We will use the Strong Perfect Graph Theorem of \citet*{chudnovsky2006strong}, stating that a graph is Berge if and only if it is perfect, and the following result of \citet*{grotschel1981ellipsoid} (see also~\citep{Sch}).

\begin{theorem}[\citet*{grotschel1981ellipsoid}]\label{thm:IS-in-perfectgraphs}
\textsc{Max-Weight Independent Set} can be solved in polynomial time on perfect graphs.
\end{theorem}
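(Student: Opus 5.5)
The plan is to follow the classical route through the Lovász theta function, which reduces the problem to approximate semidefinite optimization together with a self-reduction. First we normalize the input. Vertices of negative weight can be discarded, since they never help in a maximum-weight stable set. Multiplying by the common denominator makes all remaining weights nonnegative integers, with encoding length polynomial in the input. Perfect graphs are closed under taking induced subgraphs, so every graph arising later stays perfect.

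\emph{Key step 1: the LP/SDP characterization.} For a graph $G$, let $\mathrm{STAB}(G)$ be the convex hull of incidence vectors of stable sets. Let $\mathrm{QSTAB}(G)$ be the polytope cut out by $x\geq 0$ and $x(K)\leq 1$ for every clique $K$. Let $\mathrm{TH}(G)$ be the theta body, defined through orthonormal representations. The standard chain of inclusions is $\mathrm{STAB}(G)\subseteq\mathrm{TH}(G)\subseteq\mathrm{QSTAB}(G)$. If $G$ is perfect, then $\overline{G}$ is perfect as well, by Lovász's (weak) perfect graph theorem. Polyhedral arguments of Fulkerson and Chvátal then show that $\mathrm{QSTAB}(G)=\mathrm{STAB}(G)$, so all three bodies coincide. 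Consequently, for every $w\geq 0$, the weighted theta number satisfies
\[
\vartheta(G,w)=\max\{w^{\top}x : x\in\mathrm{TH}(G)\}=\alpha_w(G).
\]

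\emph{Key step 2: computing $\vartheta$.} The weighted theta number $\vartheta(G,w)$ is the optimum of a semidefinite program of polynomial size. One formulation maximizes $\sum_{i,j}\sqrt{w_iw_j}X_{ij}$ over $X\succeq 0$ with $\operatorname{tr}X=1$ and $X_{ij}=0$ for $ij\in E(G)$. The square roots can be avoided by using the equivalent formulation with vector variables, or the dual characterization. We then apply the ellipsoid method, as in Grötschel, Lovász and Schrijver. The feasible region is bounded and contains a ball of known radius in the appropriate affine subspace. A separation oracle for the PSD cone is obtained from an approximate eigenvector computation, for instance via an $LDL^{\top}$ factorization. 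Together these yield $\vartheta(G,w)$ to within additive error $1/3$ in time polynomial in $n$ and $\log\sum_v w(v)$. Since $\alpha_w(G)$ is an integer, rounding the approximate value gives $\alpha_w(G)$ exactly. I expect this step to be the main technical obstacle: one must control numerical precision and establish the inner and outer ball bounds required by the ellipsoid method. I would treat it by invoking the general GLS equivalence of separation and optimization for well-described convex bodies, rather than redoing the analysis.

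\emph{Key step 3: recovering a set.} Process the vertices $v_1,\ldots,v_n$ in turn, maintaining a current graph $G'$ that starts as $G$. For each $v_i$, compute $\alpha_w(G'-v_i)$ using step 2. If it equals $\alpha_w(G')$, delete $v_i$ from $G'$; otherwise keep it. At the end, every remaining vertex is forced into every optimal stable set of the remaining graph. The remaining graph has the same optimum value as $G$, so it is edgeless up to zero-weight vertices; deleting any zero-weight vertices that have neighbors yields a stable set of weight $\alpha_w(G)$. This procedure uses $n+1$ oracle evaluations on perfect induced subgraphs, so the whole algorithm runs in polynomial time.
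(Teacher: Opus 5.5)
The paper gives no proof of this statement; it imports it as a black box from Grötschel, Lovász and Schrijver (see also Schrijver's book). Your outline is the standard argument from that source and is correct as a sketch: $\mathrm{STAB}(G)=\mathrm{TH}(G)=\mathrm{QSTAB}(G)$ for perfect $G$, so $\vartheta(G,w)=\alpha_w(G)$; then ellipsoid-based approximation of $\vartheta(G,w)$ with integer rounding; then a vertex-deletion self-reduction to recover an optimal set. In step 3 the final graph is in fact edgeless outright, because zero-weight vertices are always deleted and every surviving vertex lies in every optimal stable set, so your zero-weight caveat is harmless but unnecessary.
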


The crucial observation is that, when both containers induce $P_4$-free graphs, the list auxiliary graph is perfect.

\begin{lemma}\label{lem:list2-aux-perfect}
Let $G$ be a graph, let $\mathcal{L}\colon V(G)\to2^{\{1,2\}}$ be a $2$-list assignment, and let $A_1,A_2\subseteq V(G)$. If $G[A_1]$ and $G[A_2]$ are $P_4$-free, then $\mathcal{Q}_{\mathcal{L}}(G;A_1,A_2)$ is perfect.
\end{lemma}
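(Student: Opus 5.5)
The plan is to apply the Strong Perfect Graph Theorem of \citet*{chudnovsky2006strong}: it suffices to show that $Q:=\mathcal{Q}_{\mathcal{L}}(G;A_1,A_2)$ contains no odd hole and no odd antihole. We will use two structural facts throughout.
\begin{enumerate}[(a)]
\item For $i\in\{1,2\}$, the $i$-th layer of $Q$ is isomorphic to an induced subgraph of $G[A_i]$, and hence is $P_4$-free.
\item The cross-edges form a matching, so every vertex of $Q$ has at most one neighbor in the other layer.
\end{enumerate}
Since every hole of length at least $5$ and every antihole of length at least $5$ contains an induced $P_4$, fact (a) implies that no such hole or antihole lies entirely within one layer.

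Odd antiholes of length $\ell\ge7$ are handled by a counting argument. Let $C$ be such an antihole, and let $C_i$ denote its vertices in layer $i$; by the above, both $C_1$ and $C_2$ are nonempty. In $C$, every vertex has exactly two non-neighbors. A vertex $x\in C_1$ has at most one neighbor in $C_2$ by fact (b), so it has at least $|C_2|-1$ non-neighbors there. This gives $|C_2|\le3$, and by symmetry $|C_1|\le3$. Hence $\ell\le6$, a contradiction. Since $\overline{C_5}=C_5$, the only remaining case is that of odd holes.

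Let $C$ be an odd hole of length $\ell\ge5$, which must meet both layers. Deleting the cross-edges of $C$ splits it cyclically into maximal \emph{segments}. Each segment is an induced path of $C$ contained in a single layer, and consecutive segments lie in different layers. Consequently, there are an even number $2t\ge2$ of segments and exactly $2t$ cross-edges. Each segment is an induced path in a $P_4$-free layer, so it has at most $3$ vertices, that is, $0$, $1$ or $2$ edges. Writing $e_j$ for the number of edges of the $j$-th segment, we get $\ell=2t+\sum_j e_j$. Thus an odd hole forces an odd number of segments with exactly one edge.

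Next we project back to $G$ via $v^i\mapsto v$. A cross-edge $v^1v^2$ means that two consecutive segments share the original vertex $v$ as an endpoint. So $C$ projects onto a closed walk $W$ in $G$ of length $\sum_j e_j$, whose layer-$i$ portions lie in $A_i\cap V_i(\mathcal{L})$. The key step is a local exchange argument. Consider a segment $x^1y^1$ with one edge, flanked by cross-edges $x^2x^1$ and $y^1y^2$. We will try to show that the layer-$2$ segments through $x^2$ and $y^2$, together with the induced property of $C$ (no chords such as $x^2y^2$ when $xy\in E(G)$ and $x^2,y^2$ both exist), force these neighbouring segments to have constrained lengths. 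The aim is to derive either a chord of $C$ or an induced $P_4$ in one layer.

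The strategy is to walk around $C$ and show that segments with exactly one edge must come in pairs. Concretely, we will show that the pattern is forced to be periodic, like a $C_4$ or $C_6$ built from single edges and cross-edges, so that $\ell$ is even. The short cases $2t=2$ and $2t=4$ will be checked by hand. For larger $t$, we will consider two segments in the same layer that lie at distance two apart along $C$, and argue that their union with the connecting projections contains a $P_4$ in that layer unless they are anticomplete in a controlled way. This parity case analysis for odd holes crossing between layers is the step I expect to be the main obstacle. By contrast, the antihole case and the single-layer cases follow immediately from facts (a) and (b).
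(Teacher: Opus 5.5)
\textbf{There is a genuine gap: the odd hole meeting both layers is never proved.}

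Your reduction to Berge graphs is correct and matches the paper. So are the exclusion of long holes and antiholes lying inside a single layer, and the counting argument ruling out odd antiholes of length at least seven. The gap is the case of an odd hole $C$ that meets both layers, which is the heart of the lemma. For this case you give only a plan: walk around $C$, show that one-edge segments ``come in pairs'', check $2t=2$ and $2t=4$ by hand, and for larger $t$ find a $P_4$ formed by same-layer segments at distance two. None of this is carried out. The plan also aims at the wrong target. No periodicity or pairing statement is needed, and $P_4$-freeness is needed only to bound segment lengths. The true statement is stronger: in a hole of length at least five there are no one-edge segments at all.

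\textbf{The missing step, which you already wrote down in passing.}
\begin{enumerate}
\item A segment cannot have $0$ edges. Its single vertex would be incident with two cross-edges of $C$, contradicting your fact (b). So every segment has one or two edges.
\item Suppose some segment is a single edge $x^1y^1$. Its endpoints are incident with cross-edges of $C$, and by (b) these must be $x^1x^2$ and $y^1y^2$. Hence $x^2,y^2\in V(C)$.
\item Since $x\neq y$ and $xy\in E(G)$, the vertices $x^2$ and $y^2$ are adjacent in $Q$.
\item Because $C$ is induced, $x^2y^2$ must be an edge of $C$ rather than a chord. This closes the cycle $x^1y^1y^2x^2$, so $C$ has length $4$, contradicting $\ell\geq 5$.
\item Hence every one of the $2t$ segments has exactly two edges, and $\ell=2t+2\cdot 2t=6t$ is even. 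So $C$ is not an odd hole.
\end{enumerate}

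The paper splits this argument into two cases. When $r=2$, the two segments have equal length because each length is determined by whether $xy\in E(G)$. When $r\geq 4$, the edge $x^2y^2$ is a chord. The argument above handles both cases at once. With it in place of your planned case analysis, your proof is complete.
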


\begin{proof}
Let
\[
\mathcal{Q}:=\mathcal{Q}_{\mathcal{L}}(G;A_1,A_2).
\]
For $i\in\{1,2\}$, denote by $W_i$ the $i$-th layer, that is,
\[
W_i:= \{v^i : v\in A_i \text{ and } i\in\mathcal{L}(v)\}.
\]
Thus $V(\mathcal{Q})=W_1\cup W_2$, and $W_1,W_2$ are disjoint. The graph $\mathcal{Q}[W_i]$ is isomorphic to the induced subgraph $G[A_i\cap V_i(\mathcal{L})]$. Since $G[A_i]$ is $P_4$-free and $A_i\cap V_i(\mathcal{L})\subseteq A_i$, the graph $\mathcal{Q}[W_i]$ is $P_4$-free for each $i\in\{1,2\}$.
By the Strong Perfect Graph Theorem, it is enough to prove that
 $\mathcal{Q}$ is Berge. 

Suppose first that $\mathcal{Q}$ contains an odd hole $C$. If $C$ is contained entirely in $W_i$ for some $i\in\{1,2\}$, then $\mathcal{Q}[W_i]$ contains an induced cycle of odd length at least five. Taking four consecutive vertices of this induced cycle gives an induced $P_4$ in $\mathcal{Q}[W_i]$, contradicting that $\mathcal{Q}[W_i]$ is $P_4$-free. Therefore $C$ contains vertices from both layers.

The only edges between the two layers are edges of the form $v^1v^2$, i.e., cross-edges. Consequently, every passage in $C$ from one layer to the other uses a cross-edge. Let $r$ be the number of cross-edges of $C$. Since a cycle must return to the layer from which it starts, $r$ is even. Also $r\geq 2$, because $C$ uses both layers.

Delete now from $C$ all cross-edges. The remaining components are paths, each contained entirely in one layer. We call these paths the \defn{layer-segments} of $C$. Since every vertex of $\mathcal{Q}$ is incident with at most one cross-edge, two distinct cross-edges of $C$ cannot be consecutive on $C$. Hence every layer-segment has length at least one. Moreover, every layer-segment is an induced path inside its layer, because any chord of such a path inside the layer would be a chord of the hole $C$. Since both layers are $P_4$-free, no layer-segment has length at least three. Therefore, every layer-segment has length one or two.

Suppose first that $r=2$. Then $C$ consists of two cross-edges, say $x^1x^2$ and $y^1y^2$, together with one layer-segment joining $x^1$ to $y^1$ in $W_1$ and one layer-segment joining $x^2$ to $y^2$ in $W_2$. Here $x\neq y$, because the two cross-edges are distinct. The segment between $x^1$ and $y^1$ has length one if and only if $xy\in E(G)$ and, by the previous paragraph, it has length two if and only if $xy \notin E(G)$. The same statement holds for the segment between $x^2$ and $y^2$. Since the adjacency of $x$ and $y$ in $G$ is independent of the layer, the two layer-segments have the same length. Therefore the length of $C$ is either $2+1+1=4$ or $2+2+2=6$. In either case, $C$ is not an odd hole, a contradiction.

Suppose finally that $r\geq 4$. We claim that no layer-segment has length one. Suppose, to the contrary, that some layer-segment has length one. By symmetry, suppose that this segment is the edge $x^1y^1$ in $W_1$, where $x^1x^2$ and $y^1y^2$ are cross-edges of $C$. Then $xy\in E(G)$, and hence $x^2y^2$ is an edge of $\mathcal{Q}[W_2]$. Since $r\geq 4$, the $x^2$--$y^2$ arc of $C$ different from the path $x^2x^1y^1y^2$ contains at least one further cross-edge, and hence contains an internal vertex. Therefore this arc has length at least two, so $x^2$ and $y^2$ are not consecutive on $C$. Thus the edge $x^2y^2$ is a chord of $C$, contradicting that $C$ is induced. This proves that every layer-segment has length two. But since there are exactly $r$ layer-segments, the length of $C$ is $r+2r=3r$. Since $r$ is even, this length is even, contradicting that $C$ is an odd hole. Thus $\mathcal{Q}$ contains no odd hole.

It remains to argue that $\mathcal{Q}$ contains no odd antihole. Since a $5$-antihole is also a $5$-hole, and we have already shown that $\mathcal{Q}$ contains no odd hole, it is enough to exclude odd antiholes of length at least seven. Suppose, to the contrary, that $\mathcal{Q}$ contains an odd antihole $C$ of length at least seven. If $C$ is contained entirely in one layer, say $W_i$, then $\mathcal{Q}[W_i]$ contains an induced odd antihole of length at least seven. The complement of $\mathcal{Q}[V(C)]$ is then an induced odd cycle of length at least seven. Taking four consecutive vertices of this induced cycle gives an induced $P_4$ in the complement of $\mathcal{Q}[V(C)]$. Since $P_4$ is self-complementary, the same four vertices induce a $P_4$ in $\mathcal{Q}[V(C)]$, and hence in $\mathcal{Q}[W_i]$. This contradicts that $\mathcal{Q}[W_i]$ is $P_4$-free. Thus $C$ meets both layers.

Let $a:=|V(C)\cap W_1|$ and $b:=|V(C)\cap W_2|$. Since $C$ is an antihole of length at least seven, every vertex of $C$ has exactly two non-neighbors inside $V(C)$. Take a vertex $x^1\in V(C)\cap W_1$. The only possible neighbor of $x^1$ in the other layer $W_2$ is $x^2$. Hence $x^1$ is adjacent to at most one vertex of $V(C)\cap W_2$, and so it is non-adjacent to at least $b-1$ vertices of $V(C)\cap W_2$. Since $x^1$ has exactly two non-neighbors in $V(C)$, we get $b-1\leq 2$, and hence $b\leq 3$. By the symmetric argument applied to a vertex of $V(C)\cap W_2$, we get $a\leq 3$. Therefore $|V(C)|=a+b\leq 6$, contradicting that $C$ has length at least seven. Thus $\mathcal{Q}$ contains no odd antihole.
\end{proof}

We now give the algorithm for \textsc{Max-Weight List $2$-Colorable Induced Subgraph} on $kP_4$-free graphs.

\begin{algorithm}[]
\caption{$\Lambda_2(G,k,\mathcal{L},w)$}
\label{alg:list2}
\begin{algorithmic}[1]
\Require A $kP_4$-free graph $G$, a $2$-list assignment $\mathcal{L}\colon V(G)\to2^{\{1,2\}}$, and a weight function $w\colon V(G)\to\mathbb{Q}_{\geq0}$
\Ensure A maximum-weight set $Y\subseteq V(G)$ such that $G[Y]$ admits an $\mathcal{L}$-coloring

\If{$k=1$} \label{alg:list2:k1-test}
    \State $\mathcal{F}\gets \{V(G)\}$ \label{alg:list2:k1-family}
\Else
    \State Compute a $P_4$-amiable family $\mathcal{F}$ of $G$ using \Cref{thm:HAF} \label{alg:list2:haf-family}
\EndIf

\State $Y_{\mathrm{best}}\gets \emptyset$ \label{alg:list2:init}

\ForAll{$(A_1,A_2)\in \mathcal{F}\times \mathcal{F}$} \label{alg:list2:loop}
    \State Construct $\mathcal{Q}_{A_1,A_2}\gets \mathcal{Q}_{\mathcal{L}}(G;A_1,A_2)$ \label{alg:list2:construct-q}
    \State Define $\widehat{w}(v^i)\gets w(v)$ for every $v^i\in V(\mathcal{Q}_{A_1,A_2})$ \label{alg:list2:weights}
    \State Compute a maximum-weight stable set $J$ of $\mathcal{Q}_{A_1,A_2}$ with respect to $\widehat{w}$ using \Cref{thm:IS-in-perfectgraphs} \label{alg:list2:max-stable}
    \State $Y\gets \{v\in V(G): v^i\in J \text{ for some } i\in\{1,2\}\}$ \label{alg:list2:define-Y}
    \If{$w(Y)>w(Y_{\mathrm{best}})$} \label{alg:list2:compare}
        \State $Y_{\mathrm{best}}\gets Y$ \label{alg:list2:update}
    \EndIf
\EndFor

\State \Return $Y_{\mathrm{best}}$ \label{alg:list2:return}
\end{algorithmic}
\end{algorithm}

\begin{theorem}\label{thm:list2-kp4}
For every fixed $k\in\mathbb{N}$, \textsc{Max-Weight List $2$-Colorable Induced Subgraph} can be solved in quasi-polynomial time on $kP_4$-free graphs. More precisely, given an $n$-vertex $kP_4$-free graph $G$, a $2$-list assignment $\mathcal{L}\colon V(G)\to2^{\{1,2\}}$, and a weight function $w\colon V(G)\to\mathbb{Q}_{\geq0}$, Algorithm~\ref{alg:list2} returns an optimum solution in time $n^{O_k(\log n)}$.
\end{theorem}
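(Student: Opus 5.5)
The plan is to prove correctness and running time separately. Correctness splits into soundness (every returned set is feasible) and completeness (some iteration reaches the optimum).

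\emph{Soundness.} First I check that every member of $\mathcal{F}$ induces a $P_4$-free subgraph of $G$. For $k=1$ the graph $G$ is itself $P_4$-free, so $\mathcal{F}=\{V(G)\}$ qualifies. For $k\ge 2$ this holds by \Cref{thm:HAF} applied with $H=P_4$, which has no isolated vertices. Hence for every pair $(A_1,A_2)\in\mathcal{F}\times\mathcal{F}$, \Cref{lem:list2-aux-perfect} shows that $\mathcal{Q}_{A_1,A_2}$ is perfect. So line~\ref{alg:list2:max-stable} is a valid application of \Cref{thm:IS-in-perfectgraphs} and runs in polynomial time. By \Cref{lem:list2-assignment-correspondence-vertex}, the stable set $J$ corresponds to a pair $(Y,\varphi_J)$ where $\varphi_J$ is an $\mathcal{L}$-coloring of $G[Y]$ with $\widehat w(J)=w(Y)$. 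Thus every candidate $Y$ is feasible, and so is $Y_{\mathrm{best}}$; the initial value $\emptyset$ is trivially feasible.

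\emph{Completeness.} This is the key step. Fix an optimal $Y^\star$ with an $\mathcal{L}$-coloring $\varphi^\star$. Each color class $(\varphi^\star)^{-1}(i)$ is an independent set of $G$. By the amiability property (or trivially when $k=1$), there is $A_i\in\mathcal{F}$ with $(\varphi^\star)^{-1}(i)\subseteq A_i$ for $i\in\{1,2\}$. The pair $(A_1,A_2)$ is enumerated in the loop. By \Cref{lem:list2-assignment-correspondence-vertex}, $(Y^\star,\varphi^\star)$ yields a stable set of $\mathcal{Q}_{A_1,A_2}$ of weight $w(Y^\star)$. Therefore the computed $J$ has $\widehat w(J)\ge w(Y^\star)$, and so $w(Y)=\widehat w(J)\ge w(Y^\star)$. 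The final $Y_{\mathrm{best}}$ thus has weight at least $w(Y^\star)$, and together with soundness it is optimal.

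The only subtle point is that one needs separate containers for the two color classes, which is exactly why the loop ranges over pairs in $\mathcal{F}\times\mathcal{F}$. The hard structural work is already done in \Cref{lem:list2-aux-perfect}.

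\emph{Running time.} By \Cref{thm:HAF}, $|\mathcal{F}|\le n^{O_k(\log n)}$ and $\mathcal{F}$ is computed within the same time bound. The loop therefore runs $|\mathcal{F}|^2=n^{O_k(\log n)}$ times. Each iteration builds a graph on at most $2n$ vertices and solves weighted stable set on it. This costs time polynomial in $n$ and in the encoding length of the weights, since the copied weights have the same encoding as $w$. The total running time is $n^{O_k(\log n)}$.
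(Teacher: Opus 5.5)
Your proposal is correct and follows the paper's proof essentially step by step. It uses the same check that $\mathcal{F}$ consists of $P_4$-free containers covering every stable set, feasibility via \Cref{lem:list2-assignment-correspondence-vertex} with perfectness from \Cref{lem:list2-aux-perfect}, optimality by placing the two color classes of an optimal solution into a pair of containers, and the $|\mathcal{F}|^2$ iteration count (for $k=1$ the size bound is trivial, $|\mathcal{F}|=1$, rather than coming from \Cref{thm:HAF}).
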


\begin{proof}
Fix $k\in\mathbb{N}$, and let $G$ be an $n$-vertex $kP_4$-free graph with $2$-list assignment $\mathcal{L}$ and weight function $w$.
Let $\mathcal{F}$ be the family computed by Algorithm~\ref{alg:list2} in lines~\ref{alg:list2:k1-test}--\ref{alg:list2:haf-family}. We first establish two properties of $\mathcal{F}$ used throughout the proof. We claim that every member of $\mathcal{F}$ induces a $P_4$-free graph in $G$, and every stable set of $G$ is contained in some member of $\mathcal{F}$. If $k=1$, then $G$ is $P_4$-free, and line~\ref{alg:list2:k1-family} sets $\mathcal{F}=\{V(G)\}$; both properties are immediate. If $k\geq2$, since $P_4$ has no isolated vertices, \Cref{thm:HAF} applies with $H=P_4$. Thus line~\ref{alg:list2:haf-family} computes a $P_4$-amiable family, and the two properties follow from the definition of $P_4$-amiability. Note that $|\mathcal{F}|=n^{O_k(\log n)}$, and the family is computed in time $n^{O_k(\log n)}$ by \Cref{thm:HAF}.

We next verify that every iteration of the loop in line~\ref{alg:list2:loop} is well-defined. Let $(A_1,A_2)$ be the ordered pair considered in an arbitrary iteration. By the previous paragraph, $G[A_1]$ and $G[A_2]$ are $P_4$-free. Hence \Cref{lem:list2-aux-perfect} implies that the graph $\mathcal{Q}_{A_1,A_2}$ constructed in line~\ref{alg:list2:construct-q} is perfect, and so the maximum-weight stable set required in line~\ref{alg:list2:max-stable} can indeed be computed in polynomial time.

We now prove that Algorithm~\ref{alg:list2} returns a feasible set. Let $N:=|\mathcal{F}\times\mathcal{F}|$, and fix the order in which the loop in line~\ref{alg:list2:loop} processes the ordered pairs. For $t\in\{1,\dots,N\}$, let $(A_1^t,A_2^t)$ be the ordered pair processed in the $t$-th iteration. Let $J_t$ be the stable set computed in line~\ref{alg:list2:max-stable} during this iteration, and let $Y_t:=\{v\in V(G):v^i\in J_t \text{ for some } i\in\{1,2\}\}$ be the set constructed in line~\ref{alg:list2:define-Y}. By the proof of \Cref{lem:list2-assignment-correspondence-vertex}, applied with $A_1=A_1^t$ and $A_2=A_2^t$, the stable set $J_t$ corresponds to the pair $(Y_t,\varphi_t)$, where $\varphi_t$ is an $\mathcal{L}$-coloring of $G[Y_t]$. Therefore each $Y_t$ is feasible for the input instance. The initial set $\emptyset$ in line~\ref{alg:list2:init} is also feasible.

Note that, after the loop terminates, the set returned in line~\ref{alg:list2:return} is feasible and has maximum weight among all candidates produced by the loop.

It remains to prove that one of the candidates has optimum weight. Let $(Y^\star,\varphi^\star)$ be an optimum solution of the input instance. For $i\in\{1,2\}$, set $I_i^\star:=(\varphi^\star)^{-1}(i)$. Since $\varphi^\star$ is a coloring of $G[Y^\star]$, the sets $I_1^\star$ and $I_2^\star$ are stable sets of $G$. By the covering property of $\mathcal{F}$, choose $A_1^\star,A_2^\star\in\mathcal{F}$ such that $I_i^\star\subseteq A_i^\star$ for $i\in\{1,2\}$. The loop in line~\ref{alg:list2:loop} enumerates every ordered pair of members of $\mathcal{F}$, so there is an iteration, say the $t^\star$-th iteration, in which $(A_1^{t^\star},A_2^{t^\star})=(A_1^\star,A_2^\star)$.

Consider the graph $\mathcal{Q}_{A_1^\star,A_2^\star}$ constructed in line~\ref{alg:list2:construct-q} during this iteration, and define $J^\star:=\{v^{\varphi^\star(v)}:v\in Y^\star\}$. For every $v\in Y^\star$, the vertex $v^{\varphi^\star(v)}$ belongs to $\mathcal{Q}_{A_1^\star,A_2^\star}$: indeed, $\varphi^\star(v)\in\mathcal{L}(v)$ because $\varphi^\star$ respects the lists, and $v\in I_{\varphi^\star(v)}^\star\subseteq A_{\varphi^\star(v)}^\star$ by the choice of the containers. By the proof of \Cref{lem:list2-assignment-correspondence-vertex}, $J^\star$ is a stable set of $\mathcal{Q}_{A_1^\star,A_2^\star}$ and its copied weight is $\widehat w(J^\star)=w(Y^\star)$.

Let $J_{t^\star}$ be the maximum-weight stable set computed in line~\ref{alg:list2:max-stable} during the $t^\star$-th iteration. Since $J^\star$ is feasible for this maximum-weight stable set computation, we have $\widehat w(J_{t^\star})\geq \widehat w(J^\star)=w(Y^\star)$. Let $Y_{t^\star}$ be the set constructed from $J_{t^\star}$ in line~\ref{alg:list2:define-Y}. By the proof of \Cref{lem:list2-assignment-correspondence-vertex}, $Y_{t^\star}$ is feasible and satisfies $w(Y_{t^\star})=\widehat w(J_{t^\star})$. Therefore $w(Y_{t^\star})\geq w(Y^\star)$. Then the set returned in line~\ref{alg:list2:return} has weight at least $w(Y_{t^\star}) \geq w(Y^\star)$ and hence exactly $w(Y^\star)$ by the optimality of $Y^\star$. Therefore Algorithm~\ref{alg:list2} returns an optimum solution.

It remains to bound the running time. The family $\mathcal{F}$ is computed in time $n^{O_k(\log n)}$ and has size $n^{O_k(\log n)}$. Hence the loop in line~\ref{alg:list2:loop} has $|\mathcal{F}|^2=n^{O_k(\log n)}$ iterations. In each iteration, the graph $\mathcal{Q}_{\mathcal{L}}(G;A_1,A_2)$ constructed in line~\ref{alg:list2:construct-q} has at most $2n$ vertices and can be constructed in polynomial time. The copied weight function in line~\ref{alg:list2:weights} is constructed in polynomial time. By \Cref{lem:list2-aux-perfect}, the list auxiliary graph is perfect, and hence line~\ref{alg:list2:max-stable} takes time polynomial in the size of the list auxiliary graph. The construction of $Y$ in line~\ref{alg:list2:define-Y}, the comparison in line~\ref{alg:list2:compare}, and the possible update in line~\ref{alg:list2:update} are polynomial-time operations. Thus the total running time is $n^{O_k(\log n)}$. This concludes the proof of~\Cref{thm:list2-kp4}.
\end{proof}

\section{Concluding remarks}\label{sec:conclusion}

We showed that, for every $k \in \mathbb{N}$, \textsc{Max-Weight List $2$-Colorable Induced Subgraph} is solvable in quasi-polynomial time on $kP_4$-free graphs. As \textsc{Odd Cycle Transversal}
is a special case, this yields the same conclusion for \textsc{Odd Cycle Transversal}. Paired with known results from the literature, this resolves in particular the open problem posed by \citet*{ALLSS24} of obtaining a complete (\textsf{QP} vs \textsf{NP}-hard)-dichotomy for \textsc{Odd Cycle Transversal} on $H$-free graphs. 

A first obvious open problem is to determine whether our quasi-polynomial-time algorithms can be improved to polynomial time: Do \textsc{Odd Cycle Transversal} and \textsc{Max-Weight List $2$-Colorable Induced Subgraph} admit a polynomial-time algorithm on $kP_4$-free graphs? A positive answer would provide a complete (\textsf{P} vs \textsf{NP}-hard)-dichotomy for both problems on $H$-free graphs (see \Cref{thm:literature-oct}). This also raises the natural question of whether every $kP_4$-free graph admits a $P_4$-amiable family of polynomial size and computable in polynomial
time. 

Another natural direction is to obtain similar dichotomies for \textsc{Max-Weight List $r$-Colorable Induced Subgraph} on $H$-free graphs in the remaining cases $r \in  \{3, 4\}$. Similarly to the case $r = 2$, it remains to consider linear forests $H$ of the form $H = k_4P_4 + k_3P_3 + k_2P_2 + k_1P_1$, with $k_4 \geq 1$ and either $k_3 + k_2 \geq 1$ or $k_4 \geq 2$. Note that, when $r \in \{3,4\}$, even the behavior of the decision version \textsc{List $r$-Coloring} remains rather obscure. Indeed, even though a quasi-polynomial-time algorithm for \textsc{List $3$-Coloring} on $P_t$-free graphs for every $t\in \mathbb{N}$ has been recently obtained (see \citep[Theorem~1(a)]{OR21}), thus implying an analogous result on $kP_4$-free graphs, to the best of our knowledge, no quasi-polynomial-time algorithm for \textsc{List $4$-Coloring} on $kP_4$-free graphs is known. Moreover, it seems challenging even to provide a polynomial-time algorithm for \textsc{$3$-Coloring} on $2P_4$-free graphs \citep{JKMNP22}.

Another related direction is to extend \Cref{thm:main_1} from $kP_4$-free
graphs to graphs of bounded induced-$P_4$-packing treewidth~\cite{nikabadi2026induced}. The local
container lemma of~\cite{nikabadi2026induced} produces residual instances containing no induced $P_4$ that intersects a distinguished bag. However,
our~\Cref{lem:list2-aux-perfect} uses the stronger property that both layers
of the auxiliary graph are globally $P_4$-free. Thus such an extension
would require a local counterpart of the perfectness argument, or a
different method for solving the resulting residual instances.

\bibliographystyle{abbrvnat}
\bibliography{ref}

\end{document}